\title{\LARGE \bf
% Sample Efficient Algorithms for Linear System Identification under Noisy Observations
Error-In-Variables Methods for Efficient System Identification with Finite-Sample Guarantees}
\author{Yuyang Zhang$^{1,2}$, Xinhe Zhang$^{1}$, Jia Liu$^{1}$ and Na Li$^{1}$% <-this % stops a space
\thanks{This work is supported by NSF AI institute 2112085, NSF
ECCS 2328241, NIH R01LM014465. Yuyang Zhang is supported by the Kempner Graduate Fellowship.}
\thanks{$^{1}$Y. Zhang, X. Zhang, J. Liu and N. Li are with SEAS, Harvard University, USA.
{\tt\small \{yuyangzhang@g, xinhezhang@g, jia\_liu@seas nali@seas\}.harvard.edu.}
}%
\thanks{$^{2}$Yuyang Zhang is also with the Kempner Institute, Harvard University.
}%
}
\let\NAT@parse\undefined
\newcommand{\ls}{\text{LS}}
\newcommand{\iv}{\text{IV}}
\newcommand{\bc}{\text{BC}}
\begin{document}
\maketitle
\thispagestyle{empty}
\pagestyle{empty}

\begin{abstract}
    This paper addresses the problem of learning linear dynamical systems from noisy observations. In this setting, existing algorithms either yield biased parameter estimates or have large sample complexities. We resolve these issues by adapting the instrumental variable method and the bias compensation method, originally proposed for error-in-variables models, to our setting. We provide refined non-asymptotic analysis for both methods. Under mild conditions, our algorithms achieve superior sample complexities that match the best-known sample complexity for learning a fully observable system without observation noise. 
    % \yy{todo: conclusion, simulation}
\end{abstract}

\section{Introduction}

Dynamical systems play a central role in numerous real-world applications, spanning neuroscience \cite{app_neuro1,app_neuro2}, climate \cite{app_climate1}, power grids \cite{app_power1}, robotics \cite{app_robot1,app_robot2}, and beyond. 
In many of these domains, the underlying system models are unknown, making it challenging to perform critical downstream tasks including state prediction \cite{app_pred}, false detection \cite{app_false}, and feedback control \cite{app_feedback}. In response, researchers have been developing data-driven algorithms for learning the system models, which are often referred to as system identification (SYSID) algorithms \cite{sysid_00,ljung1998system}.

In the context of linear systems, recent research primarily focuses on two classes of systems, i.e. fully observable and partially observable systems. In the fully observable setting, the entire state vectors are observed without observation noises. To learn the linear system models, it suffices to perform least-squares on the state and input trajectories. Recent studies \cite{sysid_ls1,sysid_ls2,sysid_ls3} have established non-asymptotic error bounds of order $\tilde{\calO}(\sqrt{(n+m)/T})$ for an $n$-dimensional systems with $m$-dimensional inputs and $T$ data samples, which are optimal up to constants and logarithmic factors. In partially observable systems, the states are measured through a linear observer with observation noises. In this setting, the most widely used algorithms are Ho-Kalman-based algorithms \cite{sysid_1,sysid_markov1}. These methods first estimate the input-to-output mapping of the system,
% \lina{this is misleading because the observed things are not necessarily states}
which is composed of the so-called Markov parameters. The dynamical model, or the system matrices, is subsequently recovered from the Markov parameters. They achieve non-asymptotic error upper bounds of order $\tilde{\calO}(\sqrt{n^{x}(m+n+d)/T})$ for systems with $n$-dimensional states, $m$-dimensional input, $d$-dimensional observations and $T$ data samples. Here $x$ is some positive integer depending on the specific algorithm.

For an intermediate class of systems where the entire state vectors are observed but subject to observation noises, the aforementioned algorithms do not perform ideally. For detailed illustrations, consider the following linear system with $n$-dimensional states and $m$-dimensional inputs
\begin{equation}\begin{split}\label{eq:sys0}
    x_{t+1} = Ax_t + Bu_t + w_t, \quad \htx_t = x_t + \eta_t.
\end{split}\end{equation}
Here $x_{t}\in\bbR^n, u_{t}\in\bbR^m, w_{t}\in\bbR^n$ denote the state, input and process noise, respectively. $\htx_t\in\bbR^n$ and $\eta_t\in\bbR^n$ denote the state observation and observation noise.
On one hand, directly applying least-squares on $(\htx_t,u_t)$ and $\htx_{t+1}$, as in the fully observable case, leads to biased estimations of matrices $A$ and $B$. 
Because both $\htx_t$ and $\htx_{t+1}$ are contaminated by observation noises, the above problem becomes an error-in-variable (EIV) problem \cite{sysid_eiv,sysid_eiv2} where the least-squares estimator is known to be biased and does not converge to the true parameter regardless of the number of samples.
On the other hand, we can treat the system as a partially observable system with an identity observer matrix $C=I_n$ and apply Ho-Kalman-based algorithms. However, this approach leads to a suboptimal estimation error.
% upper bound of $\tilde{\calO}\b{\sqrt{n^x\b{n+m}/T}}$ where $x$ is a positive integer depending on the algorithm and analysis. 
This is because under the standard controllability assumption, Ho-Kalman-based algorithms need to estimate the Markov parameters up to order $n$, i.e. $\{A^iB\}_{i=0}^n$, from data. Although the estimation is unbiased, the algorithms need to learn approximately $mn/(m+n)$ times more parameters than necessary, thereby inflating the sample complexity by at least a factor of $mn/(m+n)$. 
% Finally, there exist other Bayesian-based SYSID algorithms \cite{sysid_em1,sysid_bayesian}, but these algorithms only have asymptotic guarantees.

% \lina{In the end, you can say between the two cases, the exit scenarios that fully state are observed but with noises. Specifically, consider...(the math model of the problem). Under this scenario, we can treat the system as partially observable and apply the results. However ... -- to motivate the work. Then transition to your next paragraph. }

In this paper, we propose sample-efficient algorithms to learn the system in \Cref{eq:sys0}. Specifically, we adapt two classical methods in the EIV problems, i.e. the instrumental variable (IV) method and the bias-compensation (BC) method \cite{sysid_eiv}, to our setting. We provide finite-sample analysis of the proposed algorithms, establishing an error upper bound of order $\tilde{\calO}(\sqrt{(m+n)/T})$ on the estimated system matrices under mild assumptions. Here $\tilde{\calO}(\cdot)$ hides constants and logarithmic factors. Notably, this rate matches the error upper bound for learning a fully observable linear system without any observation noises \cite{sysid_ls1}. Namely, our algorithms effectively mitigate the influence of the observation noises while only sacrificing some noise-related constants and logarithmic factors. 
Achieving this tight upper bound requires careful non-asymptotic analysis, which is fundamentally different from the asymptotic analysis in existing EIV literature \cite{sysid_eiv,sysid_eiv2}. 
% Due to the space limit, we refer the reader to the online version of this paper [23] for the auxiliary lemmas and their proofs.

%The rest of the paper is organized as follows. We first provide the problem formulation and discuss necessary assumptions (\Cref{sc:prelim}). The instrumental variable algorithm is then presented along with its performance guarantee (\Cref{sc:iv}). The bias-compensation algorithm and its performance guarantee is subsequently presented (\Cref{sc:bc}).
% Due to the space limit, we refer the reader to the online version of this paper \cite{online} for the auxiliary lemmas and their proofs. 
%Finally, the two algorithms are empirically evaluated in \Cref{sc:sim}. Due to the space limit, we refer the reader to the online version of this paper \cite{fullproof} for the auxiliary lemmas and their proofs. \lina{due to the space limit, remove the outline paragraph.}

\textit{Notations.} We use $\lesssim$ and $\gtrsim$ to hide absolute constants. We use $\tilde{\calO}(\cdot)$ to hide all problem-related constants, absolute constants and logarithmic factors. For any positive integer $a$, let $[a]$ denote set $\{1, 2, \cdots, a\}$. 
% Let $e_a$ denote the \textit{one-hot vector} with $1$ on the $a$-th coordinate. 
% $\iset{a}{b}$ denote the set of integers in $[a,b]$. 
For matrix $M\in\bbR^{m\times n}$, we let $\sigma_{1}(M) \geq \sigma_2(M)\geq \cdots \geq \sigma_{\max\{m,n\}}(M)\geq0$ denote its singular values and let $\norm{M}$ denote its operator norm.

\section{Problem Setup \& Preliminary}\label{sc:prelim}
\subsection{System Model \& Learning Objective}
We study the linear dynamical system in \Cref{eq:sys0} with $n$-dimensional states and $m$-dimensional inputs.
% \begin{equation}\begin{split}\label{eq:sys}
%     x_{t+1} = {}& Ax_{t} + Bu_{t} + w_{t}, \quad \htx_t = x_t + \eta_t.
% \end{split}\end{equation}
% Here $x_{t}\in\bbR^n, u_{t}\in\bbR^m, w_{t}\in\bbR^n$ denote the state, input and process noise, respectively. $\htx_t\in\bbR^n$ and $\eta_t\in\bbR^n$ denote the state observation and observation noise. 
For simplicity, we assume $x_{0} = 0$. We also assume $u_{t}\overset{\text{i.i.d.}}{\sim}\calN(0,\Sigma_u)$, $w_{t}\overset{\text{i.i.d.}}{\sim}\calN(0,\Sigma_w)$, $\eta_{t}\overset{\text{i.i.d.}}{\sim}\calN(0,\Sigma_{\eta})$\footnote{Our analysis can be easily adapted to inputs and noises with subGaussian distributions.}. 
We denote the above system, along with noises in the inputs and observation noises, by $\calM=(A,B,\Sigma_w,\Sigma_u,\Sigma_{\eta})$.

With the inputs $\{u_t\}_{t=0}^{T-1}$ and state measurements $\{\htx_t\}_{t=0}^{T}$, we aim to learn the system matrices $A, B$ up to $\epsilon$-accuracy:
\begin{equation*}\begin{split}
    \max\left\{\norm{A-\htA}, \norm{B-\htB}\right\} \leq \epsilon.
\end{split}\end{equation*}

\subsection{Necessary Assumptions}

Throughout the paper, we consider a target system $\calM=(A,B,\Sigma_w,\Sigma_u,\Sigma_\eta)$
% $\calM=(A,B,\sigma_u^2, \sigma_w^2)$ 
satisfying the following two assumptions standard in related literature \cite{sysid_1,proof_1}.
\begin{assumption}\label{assmp3:stable}
    There exist positive constants $\rho_A < 1$ and $\psi_A\geq 1$ such that
    \begin{equation*}\begin{split}
        \pushQED{\qed} 
        \norm{A^t} \leq \psi_A\rho_A^{t-1}, \quad \forall t\geq 0.\qedhere
        \popQED
    \end{split}\end{equation*}   
\end{assumption}
\noindent Intuitively, it assumes the $A$ matrix is ``stable'' so that the system states do not blow up over a long horizon. 
    
\begin{assumption}\label{assmp2:controllable}
    $(A,B)$ is controllable. Or equivalently, the controllability matrix $R = \begin{bmatrix}
        B & AB & \cdots & A^{n-1}B
    \end{bmatrix}$ is full-row-rank.
    We use 
    \begin{equation*}\begin{split}
        \phi_R\coloneqq \sigma_n\b{R} > 0    
    \end{split}\end{equation*}
    to denote its minimum nonzero singular value. \qed
\end{assumption}

\subsection{The Naive Least-Squares Estimator}\label{sc:ls}
To better motivate our algorithms, we first analyze the naive least-squares estimator and illustrate the reason why it is biased in our setting. 
For simplicity, we introduce the following auxiliary notations:
\begin{equation*}\begin{split}
    E = \begin{bmatrix}
        A & B
    \end{bmatrix}, \quad z_t = \begin{bmatrix}
        x_t\\
        u_t
    \end{bmatrix}, \quad \htz_t = \begin{bmatrix}
        \htx_t\\
        u_t
    \end{bmatrix}.
\end{split}\end{equation*}
The system dynamics can then be rewritten as $x_{t+1} = Ez_t + w_t$.
Ideally, one would aim to solve for $\arg\min_E \sum_{t=0}^{T-1}\norm{x_{t+1}-Ez_t}^2$. However, because we only have access to noisy observations $\{\htx_{t+1}\}_{t=0}^{T-1}$ and $\{\htz_t\}_{t=0}^{T-1}$, we could only resort to the alternative least-squares problem $\htE_{\ls} = \mathop{\arg\min}_{E}{\sum_{t=0}^{T-1} \norm{\htx_{t+1} - E\htz_t}^2}$
The close-form solution to the alternative problem is 
% \lina{in writing the equation, I would prefer adding $()$ to make it more clear to read. For instance, $\left(\sum_{t=0}^{T-1}\htx_{t+1}\htz_t\t\right)\b{\sum_{t=0}^{T-1}\htz_t\htz_t\t}^{-1}$ }
\begin{equation}\begin{split}\label{eq:estimator_naive}
    \htE_{\ls} = \b{\sum_{t=0}^{T-1}\htx_{t+1}\htz_t\t}\b{\sum_{t=0}^{T-1}\htz_t\htz_t\t}^{-1}.
\end{split}\end{equation}
Notice that $\htx_{t+1} = E\left(\htz_t - \begin{bmatrix}
    \eta_t\\
    0
\end{bmatrix}\right) + w_t +\eta_{t+1}= E\htz_t - A\eta_t + w_t + \eta_{t+1}$. Substituting this into \Cref{eq:estimator_naive} gives
\begin{equation*}\begin{split}
    \htE_{\ls} = {}& E + \b{\sum_{t=0}^{T-1}\b{- A\eta_t + w_t + \eta_{t+1}}\htz_t\t}\b{\sum_{t=0}^{T-1}\htz_t\htz_t\t}^{-1}.
    % {}& E + \sum_{t=0}^{T-1}(- A\eta_t + w_t)\htz_t\t\b{\sum_{t=0}^{T-1}\htz_t\htz_t\t}^{-1}\\
\end{split}\end{equation*}
We note that the noise $- A\eta_t + w_t + \eta_{t+1}$ is correlated with the measurement $\htz_t\t$, since both terms have $\eta_{t}$ as a component. To see how this correlation influences the performance of $\htE_{\ls}$, we further decompose $\htE_{\ls}$ as follows
\begin{equation}\begin{split}\label{eq:estimator_naive2}
    \htE_{\ls} = {}& E + \underbrace{\b{\sum_{t=0}^{T-1}- A\eta_tz_t\t
    + w_t\htz_t\t + \eta_{t+1}\htz_t\t }\b{\sum_{t=0}^{T-1}\htz_t\htz_t\t}^{-1}}_{\Delta_1}\\
    {}& - \underbrace{\b{\sum_{t=0}^{T-1}A\eta_t \begin{bmatrix}
        \eta_t\\
        0
    \end{bmatrix}\t}\b{\sum_{t=0}^{T-1}\htz_t\htz_t\t}^{-1}}_{\Delta_2}.
\end{split}\end{equation}
For term $\Delta_1$, $\eta_t$ is independent of $z_{1:t}$ and $w_t,\eta_{t+1}$ are independent of $\htz_{1:t}$. Leveraging this independency, we can prove by the well-established self-normalizing bound \cite[Theorem 1]{proof_3} that $\norm{\Delta_1} \leq \tilde{\calO}(1/\sqrt{T})$, which decays as the number of samples $T$ grows. In contrast, for term $\Delta_2$, both $\sum_{t=0}^{T-1} \eta_t\eta_t\t$ and $\sum_{t=0}^{T-1}\htz_t\htz_t\t$ are as large as $\calO(T)I$. This implies that $\Delta_2$ doesn't vanish regardless of $T$. This persistent offset term leads to the bias in $\htE_{\ls}$.

To eliminate this bias, we introduce two methods in the following sections, i.e. the \textit{instrumental variable (IV) method} and the \textit{bias compensation (BC) method}, both adapted from previous research on single-input-single-output auto-regressive error-in-variable models \cite{sysid_eiv}.

\section{The Instrumental Variable (IV) Estimator}\label{sc:iv}
\subsection{The IV Estimator and Its Design Rationale }

% As discussed in the previous section, the bias of the naive least-squares estimator $\htE_{\ls}$ comes from the strong correlation between noise $-A\eta_t+w_t+\eta_{t+1}$ and measurement $\htz_t\t$ (\Cref{eq:estimator_naive}). 
The IV method eliminates the bias in $\htE_{\ls}$, or equivalently, eliminates the correlation between measurement $\htz_t$ and noise $-A\eta_t+w_t+\eta_{t+1}$, by replacing $\htz_t$ with an instrumental variable
$\hti_t$ and modifying the estimator as follows

\begin{equation}\begin{split}\label{eq:iv_estimator}
    \Aboxed{{}&\htE_{\iv} = \b{\sum_{t=0}^{T-1} \htx_{t+1}\hti_t\t}\b{\sum_{t=0}^{T-1}\htz_t\hti_t\t}^{-1}}\\
    = {}&  E + \b{\sum_{t=0}^{T-1}(- A\eta_t + w_t + \eta_{t+1})\hti_t\t}\b{\sum_{t=0}^{T-1}\htz_t\hti_t\t}^{-1}.
\end{split}\end{equation}
On the high level, we require instrumental variables $\hti_{1:t}$ to be uncorrelated with $-A\eta_t+w_t+\eta_{t+1}$ so that the norm of covariance $\sum_{t=0}^{T-1}(- A\eta_t + w_t + \eta_{t+1})\hti_t\t$ is small\footnote{Asymptotically, for this norm to be small, it only requires $\hti_t$ to be uncorrelated with $-A\eta_t+w_t+\eta_{t+1}$. However, here we aim for a non-asymptotic upper bound and adopt a slightly strong requirement for technical reasons.}. Moreover, we require $\hti_t$ to be well correlated with $\htz_t$ so that the norm of $\sum_{t=0}^{T-1}\htz_t\hti_t\t$ is large. Combining the two requirements would lead to a small error term $\htE_{\iv} - E$. 

It turns out that choosing
\begin{equation}\begin{split}\label{eq:estimator_iv2}
    \hti_t = \begin{bmatrix}
        \htx_{t-1}\\
        u_t
    \end{bmatrix}\in\bbR^{n+m}
\end{split}\end{equation}
suffices.
To see this, we first note that $\hti_{1:t}$ is uncorrelated with $-A\eta_t+w_t+\eta_{t+1}$. We can then prove $\norm{\sum_{t=0}^{T-1}(- A\eta_t + w_t + \eta_{t+1})\hti_t\t}\leq \tilde{\calO}\b{\sqrt{T}}$. 
Secondly, we show $\hti_t$ strongly correlates with measurement $\htz_t$ by analyzing their covariance $\bbE\htz_t\hti_t\t = \bbE\begin{bmatrix}
    \htx_t\htx_{t-1}\t & \htx_tu_t\t\\
    u_t\htx_{t-1}\t & u_tu_t\t
\end{bmatrix}$. 
For the upper left block, we know $\bbE\b{\htx_t\htx_{t-1}\t} = \bbE(Ax_{t-1}+Bu_{t-1}+w_{t-1}+\eta_t)\b{x_{t-1}+\eta_{t-1}}\t = A\bbE\b{x_{t-1}x_{t-1}\t}$ because $u_{t-1}, w_{t-1}, \eta_{t-1}$ and $\eta_t$ are independently sampled from zero-mean distributions. Similarly, we conclude that the off diagonal blocks are $0$. It then naturally follows that, $\bbE\htz_t\hti_t\t = \begin{bmatrix}
    A\bbE\b{x_{t-1}x_{t-1}} & 0\\
    0 & \Sigma_u
\end{bmatrix}$, which is invertible as long as $A$ is invertible. Utilizing this key insight, we can prove that $\sum_{t=0}^{T-1} \htz_t\hti_t\t \succeq \tilde{\calO}(T)I$ for any invertible $A$. Combining the above two inequalities, we can conclude $\norm{\htE_{\iv}-E} \leq \tilde{\calO}(1/\sqrt{T})$, which decays to $0$ as $T$ tends to infinity.

% The IV estimator is formally presented in \Cref{alg:iv}.
% \begin{algorithm}[ht]
%     \caption{The Instrumental Variable (IV) Estimator}
%     \label{alg:iv}
%     \begin{algorithmic}[1]     
%         \State \textbf{Init: } input trajectory $\{u_t\}_{t=0}^{T-1}$, state measurement trajectory $\{\htx_t\}_{t=0}^{T}$.

%         \State Construct auxiliary variables:
%         \begin{equation*}\begin{split}
%             \htz_t = \begin{bmatrix}
%                 \htx_t\\
%                 u_t
%             \end{bmatrix}, \quad \hti_t = \begin{bmatrix}
%                 \htx_{t-1}\\
%                 u_t
%             \end{bmatrix}, \quad \forall t\in[1,T-1].
%         \end{split}\end{equation*}

%         \State Parameter estimation:
%         \begin{equation*}\begin{split}
%             \htE_{\iv} = \sum_{t=1}^{T-1} \htx_{t+1}\hti_t\t \b{\sum_{t=1}^{T-1}\htz_t\hti_t\t}^{-1}.
%         \end{split}\end{equation*}
        
%         % \State \textbf{Output: } $\htE_{\iv}$.
%     \end{algorithmic}
% \end{algorithm}

\begin{remark}\label{rm:iv}
    If the target system is autonomous, i.e. $x_{t+1} = Ax_t+w_t, \htx_t = x_t+\eta_t$, we can adapt the above IV estimator as follows
    \begin{equation*}\begin{split}
        \htA_{\iv} = \b{\sum_{t=1}^{T-1} \htx_{t+1}\htx_{t-1}\t} \b{\sum_{t=1}^{T-1}\htx_{t}\htx_{t-1}\t}^{-1}.
    \end{split}\end{equation*}
    The theoretical results in the next subsection can be extended to this setting with minor modifications.\qed
\end{remark}

% Substituting back gives
% \begin{equation}\begin{split}
%     \htE_{\iv} = {}& \sum_{t=1}^{T-1} \htx_{t+1}\htz_{t-1}\t \b{\sum_{t=1}^{T-1} \htz_t\htz_{t-1}\t}^{-1}.
% \end{split}\end{equation}

\subsection{Theoretical Guarantee for The IV Estimator}
Now we formally present the finite sample error bound on the IV estimator (\Cref{{eq:iv_estimator}}), along with additional assumptions on the target system $\calM=(A,B,\Sigma_w,\Sigma_u,\Sigma_{\eta})$.
\begin{assumption}\label{assmp1:invertible}
    $A$ is invertible. Let $\phi_A>0$ denote its minimal singular value. \qed
\end{assumption}

\begin{theorem}\label{thm:iv}
    Consider any $\delta\in(0,1)$. Consider system $\calM=(A,B,\Sigma_w,\Sigma_u,\Sigma_{\eta})$ that satisfies \Cref{assmp3:stable} with constants $\psi_A,\rho_A$, \Cref{assmp2:controllable} with constant $\phi_R$ and \Cref{assmp1:invertible} with $\phi_A$. 
    There exist absolute constants $c_1,c_2$ such that
    with probability at least $1-11\delta$
    \begin{equation}\begin{split}
        \norm{\htE_{\iv}-E} \leq {}& c_1\frac{\kappa_1}{\min\{\phi_A,1\}}\sqrt{\frac{m+n}{T}},
    \end{split}\end{equation}
    if $T$ satisfies
    \begin{equation}\begin{split}\label{eq:Tcond_iv}
        % T \gtrsim {}& \frac{\psi_{\tilw}^2\psi_A^4}{\min\left\{\phi_R^4, 1\right\}\phi_u^2(1-\rho_A^2)}n(m+n)^2\\
        % {}& \log\b{\frac{4n}{\delta}}\log\b{4\frac{\psi_{\tilw}\psi_A^2}{(1-\rho_A^2)\delta}n\log\frac{4n}{\delta}}\\
        T \geq {}& c_2\frac{\kappa_2}{\min\{\phi_A^2, 1\}}n(m+n)^2.
    \end{split}\end{equation}
    Here constants $\kappa_1, \kappa_2$ take the following values
    \begin{equation}\begin{split}\label{eq:kappa}
        \kappa_1 = {}& \psi_A\max\left\{\sqrt{\frac{\psi}{\phi_u}},\frac{\psi}{\phi_u}\right\}\sqrt{\frac{\min\left\{\phi_R^2, 1\right\}\phi_u+1}{\min\left\{\phi_R^6, 1\right\}\phi_u}}\\
        {}& \cdot\sqrt{\log\b{\frac{5\psi\psi_A^2}{(1-\rho_A^2)\delta}n\log\frac{4}{\delta}}},\\
        \kappa_2 = {}& \frac{\psi^2\psi_A^4}{\min\left\{\phi_R^4, 1\right\}\phi_u^2(1-\rho_A^2)}\log\b{\frac{9n}{\delta}}\\
        {}& \cdot\log\b{9\frac{\psi\psi_A^4}{(1-\rho_A^2)\delta}(m+n)\log\frac{9n}{\delta}}.
    \end{split}\end{equation}
    In the above equations $\psi \coloneqq \max\{\psi_B^2\psi_u+\psi_w, \psi_{\eta}\}$, $\psi_B \coloneqq \max\{\norm{B},1\}$, $\psi_u \coloneqq \max\{\norm{\Sigma_u},1\}$, $\psi_w \coloneqq \max\{\norm{\Sigma_w},1\}$, $\psi_\eta \coloneqq \max\{\norm{\Sigma_\eta},1\}$, $\phi_u \coloneqq \sigma_m\b{\Sigma_u}$.\qed

    The error $\norm{\htE_{\iv}-E}$ decays to $0$ as $T$ tends to infinity. Moreover, it scales on the order of $\tilde{\calO}(\sqrt{(m+n)/T})$, matching the error bound to learn a fully observable system without observation noise \cite{sysid_ls1}. Thus, our estimator effectively mitigates the influence of the observation noises while incurring only constant and logarithmic overhead. 

    The estimator imposes the additional assumption that the system matrix $A$ is invertible \Cref{assmp1:invertible}. As illustrated in the previous subsection, this assumption is made to ensure the correlation between the chosen instrumental variable $\hti_t$ and the measurement $\htz_t$. We leave for future work to determine whether this assumption is inevitable for all choices of instrumental variables. 

    % \begin{equation}\begin{split}\label{eq:Tcond_iv}
        % T \gtrsim {}& \frac{\psi_{\tilw}^2\psi_A^4}{\min\left\{\phi_R^4, 1\right\}\phi_u^2(1-\rho_A^2)}n(m+n)^2\log\b{\frac{4n}{\delta}}\log\b{4\frac{\psi_{\tilw}\psi_A^2}{(1-\rho_A^2)\delta}n\log\frac{4n}{\delta}}\\
        % T \gtrsim {}& n + \log\b{\frac{2}{\delta}},\\
        % T \gtrsim {}& \frac{\psi_{\tilw}^2\psi_A^4}{\phi_u^2\phi_R^4}n^3\log\b{\frac{n}{\delta}}\log\b{\frac{\psi_w\psi_A^4}{1-\rho_A^2}n\log\frac{n}{\delta}},\\
        % {}& c\b{\psi_A+\psi_B}\sqrt{\frac{\psi^2\psi_A^2}{1-\rho_A^2}(m+n)^2\log^2\b{\frac{2}{\delta}}}\\
        % {}& \sqrt{\log\b{\frac{\psi\psi_A^2}{1-\rho_A^2}(m+n)\log\frac{2}{\delta}}}\\
        % \leq {}& \phi_A\min\b{\frac{\phi_u\phi_R^2}{16}, \frac{3}{8}\phi_u}\sqrt{T}.
    % \end{split}\end{equation}
\end{theorem}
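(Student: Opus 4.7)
The plan is to decompose the error as a product
\begin{equation*}
    \htE_{\iv} - E = N\cdot M^{-1},
\end{equation*}
where $N = \sum_{t=0}^{T-1}(-A\eta_t + w_t + \eta_{t+1})\hti_t\t$ and $M = \sum_{t=0}^{T-1}\htz_t\hti_t\t$, so that $\norm{\htE_{\iv}-E} \leq \norm{N}\cdot\norm{M^{-1}}$. The targets are $\norm{N} \lesssim \tilde{\calO}(\sqrt{(m+n)T})$ and $\sigma_{\min}(M) \gtrsim \tilde{\Omega}(T\cdot\min\{\phi_A,1\}\cdot\phi_u\phi_R^2)$, whose ratio yields the advertised rate.

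For the numerator, the key structural fact is that $\xi_t \coloneqq -A\eta_t + w_t + \eta_{t+1}$ is independent of $\hti_{1:t}$: expanding $\hti_t = (\htx_{t-1}, u_t)$ via the recursion $x_{t-1} = \sum_{s=0}^{t-2}A^{t-2-s}(Bu_s+w_s)$, one sees that $\hti_{1:t}$ is measurable with respect to $\{u_{0:t}, w_{0:t-2}, \eta_{0:t-1}\}$, while $\xi_t$ depends only on the fresh noises $\{\eta_t, w_t, \eta_{t+1}\}$. Thus $\{\xi_t\}$ is a zero-mean martingale difference sequence with respect to the filtration generated by $\hti_{1:t}$, and the self-normalizing martingale bound of \cite[Theorem 1]{proof_3} gives $\norm{N} \leq \tilde{\calO}\bigl(\sqrt{(m+n)\log(\cdot)\cdot\|\sum_{t}\hti_t\hti_t\t\|}\bigr)$. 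An auxiliary bound using \Cref{assmp3:stable} shows $\|\sum_t \hti_t \hti_t\t\| \lesssim T\cdot\psi\psi_A^2/(1-\rho_A^2)$ with high probability, delivering $\norm{N} \lesssim \tilde{\calO}(\sqrt{(m+n)T})$ with the logarithmic dependencies declared in $\kappa_1$.

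For the denominator, I first identify the population cross-covariance with the block structure already derived in the text,
\begin{equation*}
    \bbE[\htz_t\hti_t\t] = \begin{bmatrix} A\,\bbE[x_{t-1}x_{t-1}\t] & 0 \\ 0 & \Sigma_u \end{bmatrix}.
\end{equation*}
By \Cref{assmp2:controllable}, for $t\geq n$ we have $\bbE[x_{t-1}x_{t-1}\t] \succeq \phi_u\phi_R^2 I$, and by \Cref{assmp1:invertible} the upper-left block is lower-bounded in singular value by $\phi_A\phi_u\phi_R^2$, giving $\sigma_{\min}(\bbE[M]) \gtrsim T\cdot \min\{\phi_A\phi_u\phi_R^2,\phi_u\}$ after an $O(n)$ burn-in. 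The remaining task is to concentrate $M$ around $\bbE[M]$: I would write each entry of $\htz_t\hti_t\t$ as a quadratic form in the stacked Gaussian noise history, apply a Hanson--Wright style bound on each vector from an $\epsilon$-net of the unit sphere in $\bbR^{n+m}$, and union-bound. The stability constants $\psi_A,\rho_A$ enter through controlling the covariance of the stacked process, while $\phi_u,\phi_R,\phi_A$ enter through the mean. Requiring the deviation to be dominated by half the mean singular value produces the sample-complexity condition \Cref{eq:Tcond_iv}, whose $n(m+n)^2$ factor reflects both the burn-in and the $(m+n)$-dimensional covering cost.

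The main obstacle is the denominator concentration: unlike an i.i.d.\ sample covariance, $\{\htz_t\hti_t\t\}$ is a sum of correlated non-symmetric random matrices whose entries are degree-two polynomials in overlapping Gaussian histories, so one must be careful to separate the ``information-bearing'' block $A\bbE[x_{t-1}x_{t-1}\t]$ from cross terms that could otherwise cancel the lower bound. Once both the numerator and denominator bounds are established, combining them via $\norm{NM^{-1}} \leq \norm{N}/\sigma_{\min}(M)$ and tracking constants through $\psi,\psi_A,\phi_u,\phi_R,\rho_A$ yields exactly the form of $\kappa_1$ in \Cref{eq:kappa}, completing the proof with the stated probability budget of $1-11\delta$ across the eleven concentration events used.
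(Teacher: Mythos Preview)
Your numerator argument has a genuine gap. You claim that $\xi_t \coloneqq -A\eta_t + w_t + \eta_{t+1}$ is a martingale difference sequence with respect to the filtration generated by $\hti_{1:t}$, and that this suffices for the self-normalizing bound. But the self-normalizing theorem of \cite{proof_3} requires a single filtration $\{\calF_t\}$ in which $\hti_t$ is $\calF_{t-1}$-measurable, $\xi_t$ is $\calF_t$-measurable, and $\xi_t$ is conditionally zero-mean and subGaussian given $\calF_{t-1}$. No such filtration exists here: if $\calF_t$ is large enough to make $\xi_t$ measurable, it must contain $\eta_{t+1}$; but then $\xi_{t+1} = -A\eta_{t+1} + w_{t+1} + \eta_{t+2}$ has conditional mean $-A\eta_{t+1} \neq 0$ given $\calF_t$. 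The overlap between consecutive $\xi_t$'s through $\eta_{t+1}$ is exactly the obstruction, and independence of $\xi_t$ from $\hti_{1:t}$ alone is not enough. The paper handles this by splitting the product $\xi_t\hti_t\t$ into three pieces---$A\eta_t\hti_t\t$, $(\eta_{t+1}+w_t)i_t\t$, and $(\eta_{t+1}+w_t)[\eta_{t-1};0]\t$---each of which pairs an i.i.d.\ Gaussian noise against a regressor from which it is genuinely independent, so that \Cref{lem:iv3} applies term by term. An equivalent fix is to split $\xi_t$ itself into $(-A\eta_t+w_t)$ and $\eta_{t+1}$, but either way some decomposition is required, and your proposal omits it.

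Your denominator argument follows a different route from the paper. You propose to concentrate $M=\sum_t \htz_t\hti_t\t$ around its expectation via Hanson--Wright plus an $\epsilon$-net. The paper instead writes $\htz_t = \mathrm{diag}(A,I)\,i_t + (\text{noise})$ and $\hti_t = i_t + [\eta_{t-1};0]$ to extract the deterministic factor $\mathrm{diag}(A,I)$ algebraically, then normalizes by the empirical Gram matrix $\Sigma_i = \sum_t i_ti_t\t$ and bounds each cross term with the same self-normalizing lemma used for the numerator (\Cref{lem:iv1}). The paper's route recycles machinery and avoids quadratic-form concentration on correlated sequences; your route is plausible in principle but you would need to control Hanson--Wright for dependent Gaussian histories carefully, and tracking the constants to match \Cref{eq:Tcond_iv} and $\kappa_2$ is nontrivial.
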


\subsection[]{Proofs for \Cref{thm:iv}}
\begin{proof}
    We begin the proof by decomposing the error $\htE_{\iv}-E$ into two terms: $\Delta_4$ which captures the covariance between $\hti_t$ and $\eta_{t+1}+w_t-A\eta_t$, and $\Delta_5$ which captures the inverse of the covariance between $\hti_t$ and $\htz_t$ (\textit{Step 1}). We then upper bound  $\Delta_4$ (\textit{Step 2}) and $\Delta_5$ (\textit{Step 3}). Combining the two bounds gives the desired result (\textit{Step 4}).

    % For notational simplicity, we have let $x_{-1}, u_{-1}, w_{-1}, \eta_{-1}$ equal zero. 
    \textit{Step 1: Error Decomposition.} Recall that $\htz_t=\begin{bmatrix}
        \htx_t\\
        u_t
    \end{bmatrix}$ and $\hti_t = \begin{bmatrix}
        \htx_{t-1}\\
        u_t
    \end{bmatrix}$. We define $i_t \coloneqq \begin{bmatrix}
        x_{t-1}\\
        u_t
    \end{bmatrix}$.
    By definition, 
    \begin{equation*}\begin{split}
        {}& \htE_{\iv}
        = \b{\sum_{t=1}^{T-1} \htx_{t+1}\hti_t\t} \b{\sum_{t=1}^{T-1} \htz_t\hti_t\t}^{-1}\\
        % = {}& \sum_{t=0}^{T-1} \b{A\b{\htx_t-\eta_t} + Bu_t + w_t + \eta_{t+1}}\hti_t\t\b{\sum_{t=0}^{T-1}\htz_t\hti_t\t}^{-1}\\
        % = {}& \sum_{t=0}^{T-1} \b{\underbrace{\begin{bmatrix}
        %     A & B
        % \end{bmatrix}}_{E} \underbrace{\begin{bmatrix}
        %     \htx_t\\
        %     u_t
        % \end{bmatrix}}_{\htz_t} + w_t + \eta_{t+1} - A\eta_t}\hti_t\t\b{\sum_{t=0}^{T-1}\htz_t\hti_t\t}^{-1}\\
        ={}& E + \underbrace{\b{\sum_{t=1}^{T-1} \b{\eta_{t+1}+w_t-A\eta_t}\hti_t\t}\b{\sum_{t=1}^{T-1} \htz_t\hti_t\t}^{-1}}_{\Delta_3}.
        % {}&  - \underbrace{\sum_{t=0}^{T-1}A\eta_t\hti_t\t \b{\sum_{t=0}^{T-1} \htz_t\hti_t\t}^{-1}}_{\Delta_4}.
    \end{split}\end{equation*}
    To upper bound $\norm{\htE_{\iv}-E}=\norm{\Delta_3}$, we decompose it as
    {\small
    \begin{equation}\begin{split}\label{eq:iv_11}
        \norm{\Delta_3} \leq {}& \Big\|\underbrace{\sum_{t=1}^{T-1} \b{\eta_{t+1}+w_t-A\eta_t}\hti_t\t\Sigma_i^{-1}}_{\Delta_4}\Big\| \Big\|\underbrace{\Sigma_i\big(\sum_{t=1}^{T-1} \htz_t\hti_t\t\big)^{-1}}_{\Delta_5}\Big\|,
    \end{split}\end{equation}
    }where we have defined $\Sigma_i\coloneqq \sum_{t=1}^{T-1} i_ti_t\t$. The invertibility of $\Sigma_i$ is guaranteed by the following lemma.
    \begin{lemma}\label{lem:iv2}
        Consider the setting of \Cref{thm:iv}. Suppose $T$ satisfies \Cref{eq:Tcond_iv}.
        % \begin{equation}\begin{split}\label{eq:ivT_1}
        %     T \gtrsim \frac{\psi_{\tilw}^2\psi_A^4}{\min\left\{\phi_R^4, 1\right\}\phi_u^2(1-\rho_A^2)}n(m+n)^2\log\b{\frac{4n}{\delta}}\log\b{4\frac{\psi_{\tilw}\psi_A^2}{(1-\rho_A^2)\delta}n\log\frac{4n}{\delta}}
        % \end{split}\end{equation}
        Then for any $\delta\in(0,1)$, the following holds with probability at least $1-\delta$,
        \begin{equation*}\begin{split}
            \pushQED{\qed} 
            \min\left\{\frac{\phi_R^2}{16}, \frac{3}{8}\right\}\phi_uTI \preceq \Sigma_i \precsim \frac{\psi\psi_A^2}{1-\rho_A^2}\log\b{\frac{4}{\delta}}nTI.\qedhere
            \popQED
        \end{split}\end{equation*}
    \end{lemma}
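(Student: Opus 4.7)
The plan is to sandwich $\Sigma_i = \sum_{t=1}^{T-1} i_t i_t\t$ by first analyzing its expectation and then establishing empirical concentration around that expectation in both directions.

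First, I would compute $\bbE[i_t i_t\t]$. Because $u_t$ is independent of $x_{t-1}$ and both are zero-mean Gaussians, this expectation is block diagonal:
\begin{equation*}
\bbE[i_t i_t\t] = \begin{bmatrix} \Gamma_{t-1} & 0 \\ 0 & \Sigma_u \end{bmatrix}, \quad \Gamma_s \coloneqq \sum_{k=0}^{s-1} A^k \bigl(B\Sigma_u B\t + \Sigma_w\bigr)(A^k)\t.
\end{equation*}
From \Cref{assmp3:stable} I obtain $\Gamma_s \preceq \psi\psi_A^2/(1-\rho_A^2) \cdot I$ uniformly in $s$, which gives the population version of the upper bound. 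For the lower bound, truncating the sum at $n$ terms and discarding $\Sigma_w$ yields $\Gamma_s \succeq \phi_u R R\t \succeq \phi_u\phi_R^2 I$ for $s \ge n$ by \Cref{assmp2:controllable}, so $\bbE[i_t i_t\t] \succeq \phi_u\min\{\phi_R^2,1\}I$ for $t \ge n+1$. Summing over $t$ then gives $\sum_t \bbE[i_t i_t\t] \succeq \phi_u\min\{\phi_R^2,1\}(T-n)I$, which is $\gtrsim \phi_u\min\{\phi_R^2,1\}TI$ once the burn-in $T \gg n$ in \Cref{eq:Tcond_iv} is satisfied.

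Next, I would transfer these population bounds to $\Sigma_i$ via concentration. Since $\{i_t\}$ is a jointly Gaussian process, the upper bound direction is the easy one: apply a standard tail bound on the operator norm of a Gaussian Gram matrix (Hanson--Wright, or an $\epsilon$-net on the unit ball of $\bbR^{m+n}$ combined with a $\chi^2$ bound on each $\langle v, i_t\rangle$) to get $\Sigma_i \preceq 2\sum_t \bbE[i_t i_t\t]$ with high probability, producing the $\log(4/\delta)\,nT$ factor in the stated upper bound. For the lower bound I would invoke a block martingale small-ball / Mendelson-style argument, in the spirit of \cite{sysid_ls1,proof_1}: partition $\{1,\ldots,T-1\}$ into blocks of length $\Theta(n)$, long enough that the coupling between different blocks decays geometrically through $\rho_A^n$, and conclude via a matrix Chernoff-type bound that $\Sigma_i$ inherits a constant fraction of its mean.

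The main obstacle is this lower concentration step. The $i_t$'s are dependent through the shared state trajectory, so matrix Bernstein does not apply directly; moreover, the state covariance is well conditioned only after the controllability matrix has been effectively excited, which forces a burn-in. The condition $T \gtrsim n(m+n)^2 \kappa_2/\min\{\phi_A^2,1\}$ tracks exactly two sources of slack: a block length of order $n$ coming from controllability and mixing, and a dimension-dependent log-covering factor of order $m+n$ coming from the matrix concentration. The numerical constants $1/16$ and $3/8$ in the stated bound arise from halving the population bound to absorb concentration error and halving again through the block-wise small-ball constant; the $\phi_R^2/16$ branch binds when controllability limits $\Gamma_s$, while the $3/8$ branch binds once $\phi_R \ge 1$ so that the $\Sigma_u$-block dominates the minimum eigenvalue of $\bbE[i_t i_t\t]$.
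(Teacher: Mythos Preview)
Your plan is sound, but the paper takes a more modular route that avoids having to run a small-ball argument on the joint process $\{i_t\}$. Instead of concentrating $\Sigma_i$ around its population mean as a whole, the paper splits
\[
\Sigma_i = \sum_{t=1}^{T-1}\begin{bmatrix} x_{t-1}x_{t-1}^\top & 0\\ 0 & u_t u_t^\top\end{bmatrix} + \sum_{t=1}^{T-1}\begin{bmatrix} 0 & x_{t-1}u_t^\top\\ u_t x_{t-1}^\top & 0\end{bmatrix},
\]
handles the two diagonal blocks by direct citation (Lemma~A.3 of \cite{proof_1} for $\sum x_{t-1}x_{t-1}^\top \succeq \tfrac{\phi_u\phi_R^2}{8}TI$, and standard i.i.d.\ covariance concentration for $\sum u_t u_t^\top \succeq \tfrac{3}{4}T\Sigma_u$), and then shows the off-diagonal cross term $\sum x_{t-1}u_t^\top$ has operator norm $O(\sqrt{T})$ via the self-normalizing martingale bound (\Cref{cor:cross_general}, since $u_t$ is independent of $x_{0:t-1}$). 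Subtracting the $O(\sqrt{T})$ cross term from the $\Omega(T)$ diagonal lower bound and invoking \Cref{eq:Tcond_iv} yields the constants $\phi_R^2/16$ and $3/8$ as exactly half of the block-wise constants. Your approach would also work, but it requires re-deriving or carefully adapting the small-ball machinery for the mixed process $(x_{t-1},u_t)$; the paper's decomposition sidesteps this by isolating the dependence into the $x$-block (where an off-the-shelf lemma applies) and a cross term that is bounded as pure noise.
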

    \textit{Step 2: Upper bound of $\mathit{\norm{\Delta_4}}$.}
    We decompose $\Delta_4$ as
    \begin{equation}\begin{split}\label{eq:iv_5}
        \norm{\Delta_4} \leq {}& \norm{\sum_{t=1}^{T-1} A\eta_{t}\hti_t\t\Sigma_i^{-1}} + \norm{\sum_{t=1}^{T-1} \b{\eta_{t+1}+w_t}i_t\t\Sigma_i^{-1}}\\
        + {}& \norm{\sum_{t=1}^{T-1} \b{\eta_{t+1}+w_t}\begin{bmatrix}
            \eta_{t-1}\\
            0
        \end{bmatrix}\t\Sigma_i^{-1}}
    \end{split}\end{equation}
    We will use the following lemma, an adapted version of the well-established self-normalizing bound (Theorem 1 in \cite{proof_3}), to bound the above three terms.
    \begin{lemma}\label{lem:iv3}
        Let $\{y_t\}_{t=1}^T$ be an arbitrary $\bbR^a$-valued stochastic process. Consider any positive integer $b$, and let $\{\zeta_t\in\bbR^b\}_{t=1}^T$ be a sequence of i.i.d. Gaussian vectors from $\calN(0, \Sigma_\zeta)$ such that $\zeta_t$ is independent of $y_{1:t}$. 
        The following holds with probability at least $1-\delta$ for any $\delta\in(0,1)$,
        \begin{equation}\begin{split}\label{eq:iv3_1}
            {}& \norm{\b{\overline{\Sigma}_y}^{-\frac{1}{2}} \sum_{t=1}^T y_t\zeta_{t}\t}\\
            \lesssim {}& \sqrt{\max\{a,b\}\psi_\zeta\log\b{\frac{5(\sigma_1/T+1)}{\delta}}}.
        \end{split}\end{equation}
        Here $\Sigma_y \coloneqq \sum_{t=1}^Ty_ty_t\t$, $\overline{\Sigma}_y \coloneqq \Sigma_y+TI$, $\sigma_1 \coloneqq \sigma_1\b{\Sigma_y}$ and $\psi_{\zeta} \coloneqq \max\{\sigma_1\b{\Sigma_\zeta},1\}$.

        If $\Sigma_y$ is invertible, the following holds with probability at least $1-\delta$ for any $\delta\in(0,1)$,
        \begin{equation}\begin{split}\label{eq:iv3_2}
            {}& \norm{\b{\Sigma_y}^{-1} \sum_{t=1}^T y_t\zeta_{t}\t}\\
            \lesssim {}& \frac{\sqrt{\sigma_n+T}}{\sigma_n}\sqrt{\max\{a,b\}\psi_\zeta\log\b{\frac{5(\sigma_1/T+1)}{\delta}}}.
        \end{split}\end{equation}
        \qed
        % Here $\sigma_1 \coloneqq \sigma_1\b{\Sigma_y}$, $\sigma_n \coloneqq \sigma_n\b{\Sigma_y}$ and $\psi_{\zeta} \coloneqq \max\{\sigma_1\b{\Sigma_\zeta},1\}$.\qed
    \end{lemma}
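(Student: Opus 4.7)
The plan is to reduce the matrix-valued self-normalized inequality to the standard scalar-valued martingale bound of Abbasi-Yadkori et al.\ (the cited Theorem 1 in \cite{proof_3}) via a covering argument on the unit sphere of $\bbR^b$, and then to derive the second inequality from the first by a clean comparison between $\Sigma_y^{-1}$ and $\overline{\Sigma}_y^{-1/2}$.

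First I would show \Cref{eq:iv3_1}. Fix a unit vector $v\in\bbR^b$. Then $v\t \zeta_t\sim \calN(0, v\t \Sigma_\zeta v)$ and its variance is at most $\psi_\zeta$. Crucially, $y_{1:t}$ being measurable w.r.t.\ a filtration for which $\zeta_t$ is conditionally independent makes $\{y_t\, (v\t\zeta_t)\}_{t\geq 1}$ a sub-Gaussian martingale difference in $\bbR^a$. Applying the standard self-normalized martingale concentration (with regularizer $TI$ so that the ``base'' matrix in the log-determinant is $TI$) gives, with probability at least $1-\delta'$,
\begin{equation*}\begin{split}
\Big\|\overline{\Sigma}_y^{-1/2}\sum_{t=1}^T y_t(v\t\zeta_t)\Big\|^2 \lesssim \psi_\zeta\log\!\Big(\frac{\det(\overline{\Sigma}_y)}{T^a\delta'}\Big) \lesssim a\psi_\zeta\log\!\Big(\frac{\sigma_1/T+1}{\delta'}\Big),
\end{split}\end{equation*}
where the last step uses $\det(\overline{\Sigma}_y)\le(\sigma_1+T)^a$. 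Next, take a $(1/2)$-net $\calV$ of the unit sphere in $\bbR^b$ of size $\le 5^b$; by a standard net argument, the operator norm $\|\overline{\Sigma}_y^{-1/2}\sum_t y_t\zeta_t\t\|$ is bounded up to a factor of $2$ by $\max_{v\in\calV}\|\overline{\Sigma}_y^{-1/2}\sum_t y_t(v\t\zeta_t)\|$. A union bound with $\delta'=\delta/|\calV|$ then yields \Cref{eq:iv3_1} with the $\max\{a,b\}$ in the exponent, after absorbing $\log 5^b$ and the $\log(1/\delta)$ terms into one logarithmic factor.

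For \Cref{eq:iv3_2}, I would reduce to the first inequality via
\begin{equation*}\begin{split}
\Big\|\Sigma_y^{-1}\sum_{t=1}^T y_t\zeta_t\t\Big\| \leq \big\|\Sigma_y^{-1}\overline{\Sigma}_y^{1/2}\big\|\cdot\Big\|\overline{\Sigma}_y^{-1/2}\sum_{t=1}^T y_t\zeta_t\t\Big\|.
\end{split}\end{equation*}
Because $\Sigma_y$ and $\overline{\Sigma}_y=\Sigma_y+TI$ commute and share eigenvectors, the first factor simplifies to $\max_i \sqrt{\sigma_i(\Sigma_y)+T}/\sigma_i(\Sigma_y)$, and since $x\mapsto\sqrt{x+T}/x$ is decreasing on $(0,\infty)$ this maximum is attained at $\sigma_n$, giving $\sqrt{\sigma_n+T}/\sigma_n$. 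Combining with \Cref{eq:iv3_1} yields the claimed bound.

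The main technical obstacle is the net argument: one must verify that the standard reduction from the operator norm to a discrete maximum (usually stated for a fixed matrix) carries over when the matrix $\overline{\Sigma}_y^{-1/2}\sum_t y_t\zeta_t\t$ is random and the event of interest is a single high-probability event that covers all $v\in\calV$ simultaneously. This is handled cleanly because the self-normalized bound, applied for each fixed $v$, holds uniformly in $t$ for the whole filtration; taking the union bound over $\calV$ (whose cardinality depends only on $b$) and then passing to the operator norm therefore costs only an extra $b\log 5$ inside the logarithm, which gets absorbed into the $\max\{a,b\}$ prefactor. Another minor subtlety is the use of $TI$ as the self-normalizing regularizer rather than the more common $I$; this is what makes the $\log$ factor scale with $\sigma_1/T+1$ rather than $\sigma_1+1$, which is important for the downstream bound on $\Delta_4$ to stay tight.
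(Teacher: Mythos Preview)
Your proposal is correct and follows essentially the same route as the paper: a $1/2$-net on $\mathbb{S}^{b-1}$ (the paper invokes this as Lemma~A.9 of \cite{proof_1}) plus the self-normalized martingale bound with regularizer $TI$ for \Cref{eq:iv3_1}, and then a commuting-matrix comparison for \Cref{eq:iv3_2}. The only cosmetic difference is that the paper factors $\Sigma_y^{-1}\overline{\Sigma}_y^{1/2}$ as $\Sigma_y^{-1}\overline{\Sigma}_y\cdot\overline{\Sigma}_y^{-1/2}$ before bounding, whereas you bound $\|\Sigma_y^{-1}\overline{\Sigma}_y^{1/2}\|$ directly; both yield $\sqrt{\sigma_n+T}/\sigma_n$.
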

    To bound the second term in \Cref{eq:iv_5}, we notice that $\eta_{t+1}+w_t\overset{\text{i.i.d.}}{\sim}\calN(0,\Sigma_\eta+\Sigma_w)$ and is independent of $i_{1:t}$, i.e. independent of $u_{1:t}$ and $x_{0:t-1}$. Moreover, by \Cref{lem:iv2}, with probability at least $1-\delta$
    \begin{equation}\begin{split}\label{eq:iv_99}
        \sigma_1\b{\Sigma_i} \precsim {}&\frac{\psi\psi_A^2}{1-\rho_A^2}\log\b{\frac{4}{\delta}}nT\\ \sigma_n(\Sigma_i)\geq {}& \min\left\{\frac{\phi_R^2}{16}, \frac{3}{8}\right\}\phi_uT.
    \end{split}\end{equation}
    We can then apply \Cref{eq:iv3_2} on $\{i_{t}\}_{t=1}^T$, $\{\eta_{t+1}+w_t\}_{t=1}^T$ and get the following with probability at least $1-2\delta$
    \begin{equation}\begin{split}\label{eq:iv_4}
        {}& \norm{\b{\Sigma_{i}}^{-1} \sum_{t=1}^{T-1}i_t\b{\eta_{t+1}+w_t}\t}\\
        \lesssim {}& \frac{\sqrt{\min\left\{\phi_R^2, 6\right\}\phi_u+16}}{\min\left\{\phi_R^2, 6\right\}\phi_u}\\
        {}& \sqrt{\frac{m+n}{T}\b{\psi_\eta+\psi_w}\log\b{\frac{5\psi\psi_A^2}{(1-\rho_A^2)\delta}n\log\frac{4}{\delta}}}.
    \end{split}\end{equation}

    Similarly, for the third term in \Cref{eq:iv_5}, we note that $\eta_{t+1}+w_t\overset{\text{i.i.d.}}{\sim}\calN(0,\Sigma_\eta+\Sigma_w)$ and is independent of $\eta_{0:t-1}$. Moreover, we define $\Sigma_{\tilde{\eta}}\coloneqq\sum_{t=0}^{T-2}\begin{bmatrix}
        \eta_t\\
        0
    \end{bmatrix}\begin{bmatrix}
        \eta_t\\
        0
    \end{bmatrix}\t$ and get the following by a standard covariance concentration result (e.g. Equation (45) of \cite{proof_2}) with probability at least $1-\delta$ for $T$ satisfying \Cref{eq:Tcond_iv}
    \begin{equation*}\begin{split}
        \sigma_1\b{\Sigma_{\tilde{\eta}}} \leq \frac{5}{4}\psi_\eta T,\quad 
        \sigma_{m+n}\b{\Sigma_{\tilde{\eta}}} \geq \frac{3}{4}\phi_\eta T.
    \end{split}\end{equation*}
    We can then apply \Cref{eq:iv3_1} on $\{\begin{bmatrix}
        \eta_{t-1}\\
        0
    \end{bmatrix}\}_{t=1}^T$, $\{\eta_{t+1}+w_t\}_{t=1}^T$ and get the following with probability at least $1-2\delta$
    \begin{equation*}\begin{split}
        {}& \norm{\sum_{t=1}^{T-1}\begin{bmatrix}
            \eta_{t-1}\\
            0
        \end{bmatrix}\b{\eta_{t+1}+w_t}\t}\\
        \lesssim {}& \sqrt{\norm{\Sigma_{\tilde{\eta}}+TI}}\sqrt{(m+n)\b{\psi_\eta+\psi_w}\log\b{\frac{6\psi_{\eta}}{\delta}}}\\
        \lesssim {}& \sqrt{(m+n)\psi_{\eta}\b{\psi_\eta+\psi_w}T\log\b{\frac{6\psi_{\eta}}{\delta}}}.
    \end{split}\end{equation*}
    Therefore, combining \Cref{eq:iv_99}, we have the following with probability at least $1-3\delta$
    \begin{equation}\begin{split}\label{eq:iv_7}
        {}& \norm{\Sigma_i^{-1}\sum_{t=1}^{T-1}\begin{bmatrix}
            \eta_{t-1}\\
            0
        \end{bmatrix}\b{\eta_{t+1}+w_t}\t}\\
        \lesssim {}& \frac{1}{\min\{\phi_R^2,6\}\phi_u}\sqrt{\frac{m+n}{T}\psi_{\eta}\b{\psi_\eta+\psi_w}\log\b{\frac{6\psi_{\eta}}{\delta}}}.
    \end{split}\end{equation}
    Combining \Cref{eq:iv_4,eq:iv_7} with a union bound gives the following with probability at least $1-5\delta$
    \begin{equation*}\begin{split}
        \norm{\Sigma_i^{-1}\sum_{t=1}^{T-1}\hti_t\b{\eta_{t+1}+w_t}\t}
        \lesssim {}& \kappa_1\sqrt{\frac{m+n}{T}}.
    \end{split}\end{equation*}
    Here $\kappa_1 = \psi_A\sqrt{\frac{\min\left\{\phi_R^2, 1\right\}\phi_u+1}{\min\left\{\phi_R^6, 1\right\}\phi_u}\log\b{\frac{5\psi\psi_A^2}{(1-\rho_A^2)\delta}n\log\frac{4}{\delta}}}$\\
    $\cdot\max\left\{\sqrt{\frac{\psi}{\phi_u}},\frac{\psi}{\phi_u}\right\}$. Similarly, with probability at least $1-5\delta$,
    \begin{equation*}\begin{split}
        \norm{\Sigma_i^{-1} \sum_{t=1}^{T-1} \hti_t\eta_{t}\t A\t}\lesssim {}& \kappa_1\sqrt{\frac{m+n}{T}}.
    \end{split}\end{equation*}
    
    Combining the above two inequalities with a union bound, we get the following with probability at least $1-10\delta$
    \begin{equation}\begin{split}\label{eq:iv_10}
        \norm{\Delta_4} 
        \leq {}& \norm{\Sigma_i^{-1}\sum_{t=1}^{T-1}\hti_t\b{\eta_{t+1}+w_t}\t} + \norm{\Sigma_i^{-1}\sum_{t=1}^{T-1} \hti_t\eta_{t}\t A\t}\\
        \lesssim {}& \kappa_1\sqrt{\frac{m+n}{T}}.
    \end{split}\end{equation}

    \textit{Step 3: Upper bound of $\norm{\Delta_5}$}. We achieve this upper bound by lower bounding its inverse by the following lemma
    \begin{lemma}\label{lem:iv1}
        Consider the setting of \Cref{thm:iv}. Suppose $T$ satisfies \Cref{eq:Tcond_iv}.
        Then with probability at least $1-\delta$ for any $\delta\in(0,1)$,
        we know that
        \begin{equation*}\begin{split}
            \pushQED{\qed} 
            \sigma_{m+n}\b{\b{\sum_{t=0}^{T-1}\htz_t\hti_t\t}\Sigma_i^{-1}} \geq \frac{\min\{\phi_A,1\}}{2}.\qedhere
            \popQED
        \end{split}\end{equation*}
    \end{lemma}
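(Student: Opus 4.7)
The plan is to show that $\bigl(\sum_{t=1}^{T-1}\htz_t\hti_t\t\bigr)\Sigma_i^{-1}$ is a small perturbation of the block-diagonal matrix $A_{\text{ext}} \coloneqq \begin{bmatrix} A & 0 \\ 0 & I_m \end{bmatrix}\in\bbR^{(n+m)\times(n+m)}$, whose minimum singular value is exactly $\min\{\phi_A,1\}$. Weyl's inequality (i.e.\ the Lipschitzness of singular values in operator norm) will then convert a bound on the perturbation into the desired lower bound $\min\{\phi_A,1\}/2$.

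First I would write an exact algebraic decomposition. Setting $v_t \coloneqq \begin{bmatrix} Bu_{t-1}+w_{t-1}+\eta_t \\ 0 \end{bmatrix}$ and $\tilde{\eta}_{t-1} \coloneqq \begin{bmatrix} \eta_{t-1} \\ 0 \end{bmatrix}$, one readily checks that $\htz_t = A_{\text{ext}} i_t + v_t$ and $\hti_t = i_t + \tilde{\eta}_{t-1}$. Expanding the outer product gives
\begin{equation*}
\sum_{t=1}^{T-1}\htz_t\hti_t\t = A_{\text{ext}}\Sigma_i + N_T, \quad N_T \coloneqq A_{\text{ext}}\sum_t i_t\tilde{\eta}_{t-1}\t + \sum_t v_t i_t\t + \sum_t v_t\tilde{\eta}_{t-1}\t.
\end{equation*}
Since \Cref{lem:iv2} provides a two-sided spectral bound on $\Sigma_i$ (in particular its invertibility with high probability), right-multiplying by $\Sigma_i^{-1}$ yields $\bigl(\sum_t\htz_t\hti_t\t\bigr)\Sigma_i^{-1} = A_{\text{ext}} + N_T\Sigma_i^{-1}$.

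Next I would bound $\norm{N_T\Sigma_i^{-1}}$ via the self-normalizing inequality \Cref{lem:iv3}. For the first piece, $\tilde{\eta}_{t-1}$ is i.i.d.\ Gaussian and independent of $i_{1:t}$ (because $\eta_{t-1}$ is independent of $x_{0:t-1}$ and $u_{1:t}$), so \Cref{eq:iv3_2} applies directly with $y_t = i_t$, $\zeta_t = \tilde{\eta}_{t-1}$. The third piece is noise-on-noise and is handled analogously after routine covariance concentration on the Gaussian sequence $\tilde{\eta}_{t-1}$. The delicate piece is $\sum_t v_t i_t\t\cdot \Sigma_i^{-1}$: $v_t$ contains $u_{t-1}$, which also sits inside $i_{t-1}$, so $v_t$ is not independent of $i_{1:t}$. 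I would split $v_t$ into its $u_{t-1}$, $w_{t-1}$, and $\eta_t$ subcomponents, pass to the adjoint via $\norm{M\Sigma_i^{-1}} = \norm{\Sigma_i^{-1}M\t}$ when convenient, and apply \Cref{eq:iv3_1} or \Cref{eq:iv3_2} to each subcomponent with an appropriately shifted filtration so that the Gaussian innovation is always independent of its partner factor. A union bound, combined with $\sigma_{m+n}(\Sigma_i) \gtrsim \min\{\phi_R^2,1\}\phi_u T$ from \Cref{lem:iv2}, yields $\norm{N_T\Sigma_i^{-1}} \lesssim \kappa\sqrt{(m+n)/T}$ for some constant $\kappa$ of the same order as $\kappa_1$ in \Cref{eq:kappa}.

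Finally, Weyl's inequality gives
\begin{equation*}
\sigma_{m+n}\bigl((\sum_t\htz_t\hti_t\t)\Sigma_i^{-1}\bigr) \geq \sigma_{m+n}(A_{\text{ext}}) - \norm{N_T\Sigma_i^{-1}} = \min\{\phi_A,1\} - \norm{N_T\Sigma_i^{-1}},
\end{equation*}
and the sample-size condition \Cref{eq:Tcond_iv} is calibrated so that the right-hand side is at least $\min\{\phi_A,1\}/2$. The main obstacle is the bookkeeping in Step 3: $\sum_t v_t i_t\t$ contains several subterms where the two factors share a Gaussian component, and for each of them one must choose the filtration carefully so that one factor is the innovation and the other is past-measurable before \Cref{lem:iv3} applies. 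The remaining effort is tracking the $\psi$, $\psi_A$, $\rho_A$, $\phi_R$, $\phi_u$ dependencies so that the constants collapse into $\kappa_1$ and $\kappa_2$ as stated.
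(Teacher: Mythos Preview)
Your proposal is correct and follows essentially the same route as the paper: the same algebraic decomposition $\sum_t\htz_t\hti_t\t = A_{\text{ext}}\Sigma_i + N_T$, the same three noise cross-sums in $N_T$, the same appeal to \Cref{lem:iv2} for $\Sigma_i$, and the same Weyl-type conclusion.

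One small tactical difference worth noting: rather than applying the self-normalized form \Cref{eq:iv3_2} to each piece of $N_T\Sigma_i^{-1}$ (which, as you correctly flag, forces filtration gymnastics for the $\sum_t v_t i_t\t$ term because $u_{t-1}$ appears in both $v_t$ and $i_{t-1}$), the paper simply bounds the raw cross-sums $\norm{N_T}$ at order $\sqrt{T}$ via a packaged corollary of \Cref{eq:iv3_1} and then multiplies by $\norm{\Sigma_i^{-1}}\lesssim 1/(\min\{\phi_R^2,1\}\phi_u T)$ from \Cref{lem:iv2}. This avoids the adjoint/index-shift bookkeeping you anticipate, at the cost of a slightly cruder dimension factor in the intermediate bound (the cross-sum bounds carry an $(m+n)$ rather than $\sqrt{m+n}$, which is why \Cref{eq:Tcond_iv} has $n(m+n)^2$ rather than $n(m+n)$). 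Your route would work too; the paper's is just a bit shorter.
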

    As a direct consequence, with probability at least $1-\delta$,
    \begin{equation}\begin{split}\label{eq:iv_9}
        \norm{\Delta_5} \leq \frac{2}{\min\{\phi_A,1\}}.
    \end{split}\end{equation}
    
    \textit{Step 4: Error Upperbound $\mathit{\norm{\htE_{\iv}-E}}$.}
    Finally, substituting \Cref{eq:iv_9,eq:iv_10} back into \Cref{eq:iv_11}, we get the following with probability at least $1-11\delta$
    \begin{equation*}\begin{split}
        \norm{\htE_{\iv}-E} = {}& \norm{\Delta_3} \leq \norm{\Delta_4}\norm{\Delta_5}\\
        \lesssim {}& \frac{\kappa_1}{\min\{\phi_A,1\}}\sqrt{\frac{m+n}{T}}.
    \end{split}\end{equation*}
    This finishes the proof.
\end{proof}

\section{The Bias Compensation (BC) Estimator}\label{sc:bc}
\subsection{The BC Estimator and Its Design Rationale}
% Instead of introducing a new variable that eliminates the correlation between the signal (i.e. $\htz_t$ in \Cref{eq:estimator_naive2}) and the noise (i.e. $-A\eta_t+w_t+\eta_{t+1}$ in \Cref{eq:estimator_naive2}), 
In the bias compensation method, we assume access to an estimation of the observation noise covariance $\wh{\Sigma}_{\eta}$ that satisfies the following inequality for some constant $\epsilon_{\eta}$
\begin{equation}\label{eq:cov_def}
    \norm{\wh{\Sigma}_{\eta}-\Sigma_{\eta}} \leq \epsilon_{\eta}.
\end{equation}
With $\wh{\Sigma}_{\eta}$, we can directly estimate the bias in the naive least-squares estimator $\wh{E}_{\ls}$ and compensate for it by designing a multiplicative correction term.

To derive the correction term, we take a deeper look into $\htE_{\ls}$ (\Cref{eq:estimator_naive2})
\begin{equation*}\begin{split}    
    \htE_{\ls} = {}& E + \Delta_1 - E\b{\sum_{t=0}^{T-1} \begin{bmatrix}
        \eta_t\\
        0
    \end{bmatrix}\begin{bmatrix}
        \eta_t\\
        0
    \end{bmatrix}\t}\b{\sum_{t=0}^{T-1}\htz_t\htz_t\t}^{-1}\\
    = {}& E + \Delta_1 - E\begin{bmatrix}
        \Sigma_{\eta} & 0\\
        0 & 0
    \end{bmatrix}\b{\frac{1}{T}\sum_{t=0}^{T-1}\htz_t\htz_t\t}^{-1}\\
    {}& + E\begin{bmatrix}
        \Sigma_{\eta}-\frac{1}{T}\sum_{t=0}^{T-1}\eta_t\eta_t\t & 0\\
        0 & 0
    \end{bmatrix}\b{\frac{1}{T}\sum_{t=0}^{T-1}\htz_t\htz_t\t}^{-1}\\
    % = {}& E \b{I-\sum_{t=0}^{T-1} \begin{bmatrix}
    %     \eta_t\\
    %     0
    % \end{bmatrix}\begin{bmatrix}
    %     \eta_t\\
    %     0
    % \end{bmatrix}\t\b{\sum_{t=0}^{T-1}\htz_t\htz_t\t}^{-1}} + \Delta_1
\end{split}\end{equation*}
As we discussed in \Cref{sc:ls}, we can show that $\norm{\Delta_1}\leq \tilde{\calO}(1/\sqrt{T})$. Moreover, by a standard covariance concentration lemma, e.g. \cite[Lemma A.5]{proof_1}, we can show that $\|\Sigma_\eta-\frac{1}{T}\sum_{t-0}^{T-1}\eta_t\eta_t\| \leq \tilde{\calO}(1/\sqrt{T})$. Therefore, we can rewrite $\htE_{\ls}$ as follows
{\small\begin{equation*}\begin{split}    
    \htE_{\ls} = {}& E\b{I - \begin{bmatrix}
        \Sigma_\eta & 0\\
        0 & 0
    \end{bmatrix}\b{\frac{1}{T}\sum_{t=0}^{T-1}\htz_t\htz_t\t}^{-1}} + \tilde{\calO}\b{\frac{1}{\sqrt{T}}}I.
\end{split}\end{equation*}}

Therefore, with $\wh{\Sigma}_{\eta}$ (\Cref{eq:cov_def}), we can compensate for the bias as follows
\begin{empheq}[box=\fbox]{align}\label{eq:bc_estimator}
    \htE_{\bc} = \htE_{\ls} \b{I - \begin{bmatrix}
        \wh{\Sigma}_\eta & 0\\
        0 & 0
    \end{bmatrix}\b{\frac{1}{T}\sum_{t=0}^{T-1} \htz_t\htz_{t}\t}^{-1}}^{-1}.
\end{empheq}

% The BC estimator is formally present in \Cref{alg:bc}.
% \begin{algorithm}[h]
%     \caption{The Bias Compensation (BC) Estimator}
%     \label{alg:bc}
%     \begin{algorithmic}[1]     
%         \State \textbf{Init: } input trajectory $\{u_t\}_{t=0}^{T-1}$, state measurement trajectory $\{\htx_t\}_{t=0}^{T}$, estimation of observation noise covariance $\wh{\Sigma}_\eta$.

%         \State Construct auxiliary variables:
%         \begin{equation*}\begin{split}
%             \htz_t = \begin{bmatrix}
%                 \htx_t\\
%                 u_t
%             \end{bmatrix}, \quad \forall t\in[1,T-1].
%         \end{split}\end{equation*}

%         \State Perform naive least-squares estimation:
%         \begin{equation*}\begin{split}
%             \htE_{\ls} = \sum_{t=0}^{T-1} \htx_{t+1}\htz_t\t\b{\sum_{t=0}^{T-1}\htz_t\htz_t\t}^{-1}.
%         \end{split}\end{equation*}

%         \State Bias Compensation:
%         \begin{equation*}\begin{split}
%             \htE_{\bc} = \htE_{\ls} \b{I - \begin{bmatrix}
%                 \wh{\Sigma}_\eta & 0\\
%                 0 & 0
%             \end{bmatrix}\b{\frac{1}{T}\sum_{t=0}^{T-1} \htz_t\htz_{t}\t}^{-1}}^{-1}.
%         \end{split}\end{equation*}
        
%         % \State \textbf{Output: } $\htE_{\bc}$.
%     \end{algorithmic}
% \end{algorithm}
\begin{remark}\label{rm:bc}
    If the target system is autonomous, i.e. $x_{t+1} = Ax_t+w_t, \htx_t = x_t+\eta_t$, we can easily adapt the above estimator as follows
    \begin{equation*}\begin{split}
        \htA_{\bc} = {}& \htA_{\ls} \b{I -  \wh{\Sigma}_\eta\b{\frac{1}{T}\sum_{t=0}^{T-1} \htx_t\htx_{t}\t}^{-1}}^{-1},\\
        \htA_{\ls} = {}& \sum_{t=0}^{T-1} \htx_{t+1}\htx_{t}\t \b{\sum_{t=0}^{T-1}\htx_{t}\htx_{t}\t}^{-1}.
    \end{split}\end{equation*}
    Our proof techniques can be extended to this setting with minor modifications.\qed
\end{remark}
% In practice, many existing methods can be applied to estimate the observation noise covariance

In the BC estimator, the invertibility of matrix $I - \begin{bmatrix}
    \wh{\Sigma}_\eta & 0\\
    0 & 0
\end{bmatrix}\b{\frac{1}{T}\sum_{t=0}^{T-1} \htz_t\htz_{t}\t}^{-1}$ can be ensured if the minimal singular value $\sigma_{m+n}\b{\frac{1}{T}\sum_{t=0}^{T-1}\htz_t\htz_t\t}$ is large. Intuitively, this can be easily satisfied by injecting large enough inputs $\{u_t\}_{t=0}^{T-1}$ into the system. The exact requirements are summarized in \Cref{assmp:input}.
\subsection{Theoretical Guarantee}
Now we present the error upper bound for the BC estimator (\Cref{eq:bc_estimator}), along with necessary assumptions.
\begin{assumption}\label{assmp:input}
    Consider system $\calM=(A,B,\Sigma_w,\Sigma_u,\Sigma_{\eta})$ that satisfies \Cref{assmp2:controllable} with constant $\phi_R$. Suppose $\sigma_{m}\b{\Sigma_u} \coloneqq \phi_u$ satisfies
    \begin{equation}\begin{split}
        \phi_u \geq \frac{32\b{\psi_\eta+\epsilon_{\eta}}}{\min\{\phi_R^2,6\}}.
    \end{split}\end{equation}
    Here $\psi_{\eta} \coloneqq \max\{\sigma_1\b{\Sigma_{\eta}},1\}$, and $\epsilon_{\eta}$ is the error of the observation noise covariance estimation $\wh{\Sigma}_{\eta}$ defined in \Cref{eq:cov_def}.
\end{assumption}

% \begin{remark}
%     In practice, the algorithm may converge with much smaller inputs than assumed, as is shown in our simulations \Cref{sc:sim}. We leave for future work to relax this assumption.
% \end{remark}

\begin{theorem}\label{thm:bc}
    Consider system $\calM=(A,B,\Sigma_w,\Sigma_u,\Sigma_{\eta})$ satisfying \Cref{assmp3:stable} with constants $\psi_A,\rho_A$, \Cref{assmp2:controllable} with constant $\phi_R$ and \Cref{assmp:input}. Suppose we have access to $\wh{\Sigma}_{\eta}$ that satisfies $\norm{\wh{\Sigma}_{\eta}-\Sigma_\eta}\leq \epsilon_{\eta}$.
    There exist absolute constants $c_1,c_2$ such that
    with probability at least $1-9\delta$
    \begin{equation}\begin{split}
        {}& \norm{\htE_{\bc} - E} 
        \leq c_1\frac{\epsilon_{\eta}}{\min\{\phi_R^2,1\}\phi_u} + c_1\kappa_1\sqrt{\frac{m+n}{T}}.
    \end{split}\end{equation}
    if $T$ satisfies
    \begin{equation}\begin{split}\label{eq:Tcond_bc}
        % T \gtrsim {}& \frac{\psi_{\tilw}^2\psi_A^4}{\min\left\{\phi_R^4, 1\right\}\phi_u^2(1-\rho_A^2)}n(m+n)^2\\
        % {}& \log\b{\frac{4n}{\delta}}\log\b{4\frac{\psi_{\tilw}\psi_A^2}{(1-\rho_A^2)\delta}n\log\frac{4n}{\delta}}\\
        T \geq {}& c_2\kappa_2n(m+n)^2.
    \end{split}\end{equation}
    Here the constants $\kappa_1, \kappa_2$ are the constant in \Cref{eq:kappa} (\Cref{thm:iv}).\qed
\end{theorem}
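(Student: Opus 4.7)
The plan is to mirror the IV proof: decompose $\htE_{\bc}-E$ algebraically, then bound each piece via covariance concentration and the self-normalizing inequality of \Cref{lem:iv3}. Let $\wh\Sigma_z \coloneqq \tfrac{1}{T}\sum_{t=0}^{T-1}\htz_t\htz_t\t$ and $M \coloneqq I - \begin{bmatrix}\wh\Sigma_\eta & 0 \\ 0 & 0\end{bmatrix}\wh\Sigma_z^{-1}$, so that $\htE_{\bc} = \htE_{\ls}M^{-1}$. Starting from the identity for $\htE_{\ls}$ established just before \Cref{eq:bc_estimator}, and adding and subtracting $E\begin{bmatrix}\wh\Sigma_\eta & 0 \\ 0 & 0\end{bmatrix}\wh\Sigma_z^{-1}$, a short calculation yields
\begin{align*}
    \htE_{\bc}-E = {}& E\begin{bmatrix}\wh\Sigma_\eta - \Sigma_\eta & 0 \\ 0 & 0\end{bmatrix}\wh\Sigma_z^{-1}M^{-1} \\
    {}& + \b{\Delta_1 - E\begin{bmatrix}\tfrac{1}{T}\sum_t\eta_t\eta_t\t - \Sigma_\eta & 0 \\ 0 & 0\end{bmatrix}\wh\Sigma_z^{-1}}M^{-1},
\end{align*}
where $\Delta_1$ is the statistical term isolated in \Cref{eq:estimator_naive2}. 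The first summand will produce the $\epsilon_\eta$ piece of the bound; the second, the $\kappa_1\sqrt{(m+n)/T}$ piece.

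Step two is to establish the well-conditioning of $M$. Following a concentration argument in the spirit of \Cref{lem:iv2}, but applied directly to the measurement Gram $\wh\Sigma_z$ rather than to an instrument cross-covariance, one expects $\sigma_{m+n}\b{\wh\Sigma_z}\gtrsim \min\{\phi_R^2,1\}\phi_u$ with high probability whenever $T$ satisfies \Cref{eq:Tcond_bc}; controllability (\Cref{assmp2:controllable}) lower-bounds the contribution of $x_t$ to the upper-left block while $\Sigma_u\succeq\phi_u I$ handles the lower-right block, and the $\eta_t$ cross-terms are absorbed. Combining this lower bound with the strong-excitation condition $\phi_u\geq 32(\psi_\eta+\epsilon_\eta)/\min\{\phi_R^2,6\}$ of \Cref{assmp:input} forces $\norm{\begin{bmatrix}\wh\Sigma_\eta & 0 \\ 0 & 0\end{bmatrix}\wh\Sigma_z^{-1}}\leq 1/2$, so $M^{-1}$ exists and $\norm{M^{-1}}\leq 2$ via the Neumann series.

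With $\norm{M^{-1}}$ controlled, the first summand is immediately bounded by $\norm{E}\cdot\epsilon_\eta\cdot\norm{\wh\Sigma_z^{-1}}\cdot\norm{M^{-1}}\lesssim \epsilon_\eta/\b{\min\{\phi_R^2,1\}\phi_u}$. For the second summand, I would bound $\norm{\Delta_1}$ by applying \Cref{lem:iv3} separately to the three subterms of $\Delta_1$ identified in \Cref{eq:estimator_naive2} (pairing $\htz_t$ with $w_t$ and with $\eta_{t+1}$, and $z_t$ with $A\eta_t$), exactly as in Step 2 of the IV proof; this contributes $\tilde{\calO}\b{\sqrt{(m+n)/T}}$. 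The residual $\tfrac{1}{T}\sum_t\eta_t\eta_t\t - \Sigma_\eta$ is controlled by a standard Gaussian covariance-concentration bound \cite[Lemma A.5]{proof_1}, contributing another $\tilde{\calO}\b{\sqrt{(n+\log(1/\delta))/T}}$. After multiplying by $\norm{E}\norm{\wh\Sigma_z^{-1}}\norm{M^{-1}}$ and collecting problem-dependent factors, both pieces combine into the $\kappa_1\sqrt{(m+n)/T}$ appearing in the theorem.

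The main technical obstacle is the high-probability lower bound on $\sigma_{m+n}\b{\wh\Sigma_z}$. Unlike the IV setting, where the off-diagonal blocks of $\bbE\htz_t\hti_t\t$ vanish in expectation, here one must directly control the block matrix $\wh\Sigma_z$ despite the $\eta_t$ contamination of its upper-left block; this is precisely where \Cref{assmp:input}'s strong-excitation hypothesis is indispensable, since without it the noise floor $\wh\Sigma_\eta$ could compete with the input excitation and destroy the invertibility of $M$. Once this spectral lemma is in hand, a union bound over the $O(1)$ high-probability events from steps 2 and 3 yields the claimed $1-9\delta$ confidence.
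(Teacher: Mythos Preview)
Your proposal is correct and follows essentially the same route as the paper: the same algebraic decomposition of $\htE_{\bc}-E$ into an $\epsilon_\eta$-term, a covariance-fluctuation term, and $\Delta_1 M^{-1}$; the same spectral lower bound on $\wh\Sigma_z$ (the paper's \Cref{lem:bc_covnoise}) combined with \Cref{assmp:input} to control $\norm{M^{-1}}$; and the same application of \Cref{lem:iv3} and standard covariance concentration to the remaining pieces. The only detail you leave implicit is that, for the $A\eta_t z_t^\top$ subterm of $\Delta_1$, the self-normalizing bound naturally produces a $\Sigma_z^{-1}$ rather than a $\Sigma_{\htz}^{-1}$, so one must also bound $\norm{\Sigma_z\Sigma_{\htz}^{-1}}$---the paper handles this via the second conclusion of \Cref{lem:bc_covnoise}, which falls out of the same spectral analysis you already plan to carry out.
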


The error bound scales linearly in the approximation error $\epsilon_\eta$ (\Cref{eq:cov_def}) of observation noise covariance error, and its influence can be mitigated by injecting larger inputs into the system. In the case where $\epsilon_{\eta}\leq \tilde{\calO}(\sqrt{(m+n)/T})$, that is, when the observation noise covariance estimation is accurate, the BC estimator achieves an error bound $\norm{\htE_{\bc}-E}=\tilde{\calO}(\sqrt{(m+n)/T})$. This matches the performance of the IV estimator \Cref{thm:iv} without requiring the system matrix $A$ to be invertible.

Due to the space limit, we leave the proof of \Cref{thm:bc} to Appendix II.
% \Cref{sc:proof_bc}. 
% Due to the space limit, we leave the proof of \Cref{thm:bc} to the online version of the paper \cite{fullproof}. 

\section{Simulations}\label{sc:sim}
We perform simulations in both a non-autonomous system with inputs and an autonomous system. 
Specifically, the non-autonomous system has the following parameters:
\begin{equation*}\begin{split}
    {}& A = 0.8*\begin{bmatrix}
        0 & 1\\
        I_{19} & 0
    \end{bmatrix}, B = \begin{bmatrix}
        I_{10}\\
        0\\
    \end{bmatrix}, \Sigma_w=\Sigma_{\eta}=I_{20}, \Sigma_u=I_5.
\end{split}\end{equation*}
With a single trajectory from the above system with length $T$, we computed the naive least-squares estimator (\Cref{eq:estimator_naive}), the Ho-Kalman estimator \cite{sysid_1}, the IV estimator (\Cref{eq:iv_estimator}) and the BC estimator (\Cref{eq:bc_estimator}).
As is shown in \Cref{fig:result}, errors of the proposed IV and BC estimators converge to almost zero as $T$ while the naive least-squares estimator remains biased. Moreover, the proposed algorithms converge much faster than the Ho-Kalman algorithm in this setting.

The autonomous system shares the same $A, \Sigma_w, \Sigma_\eta$ and $\Sigma_u$. With a single trajectory of length $T$ from the system, we computed the naive least-squares estimator (\Cref{rm:bc}), the IV estimator (\Cref{rm:iv}) and the BC estimator (\Cref{rm:bc}). The errors of the proposed IV and BC estimators converge to almost $0$ while the naive least-squares estimator remains biased.
\begin{figure}[h]\begin{center}
    \begin{minipage}[h]{0.49\linewidth}
       \centering
       \includegraphics[width=\linewidth]{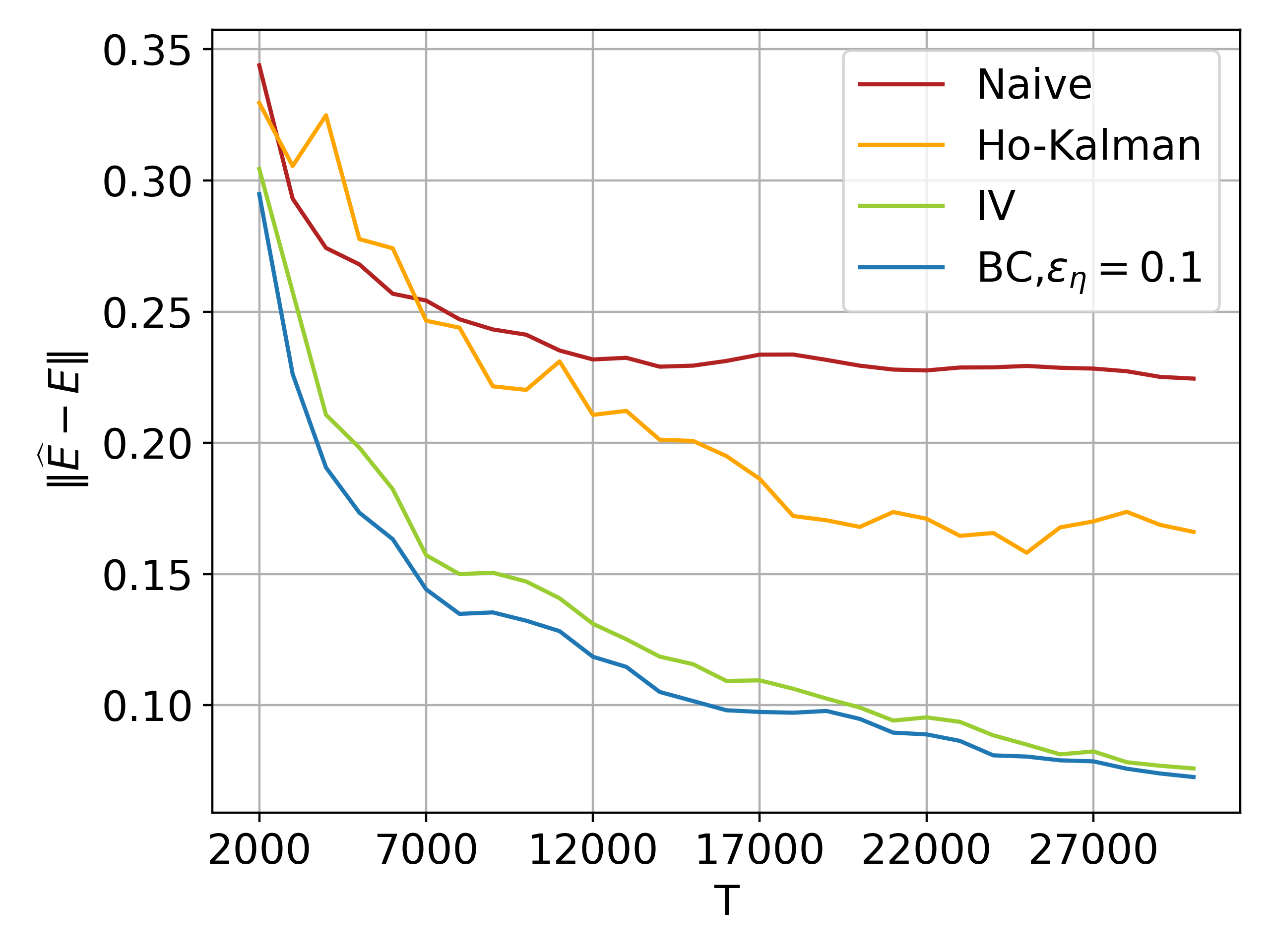}
    \end{minipage}
    \begin{minipage}[h]{0.49\linewidth}
       \centering
       \includegraphics[width=\linewidth]{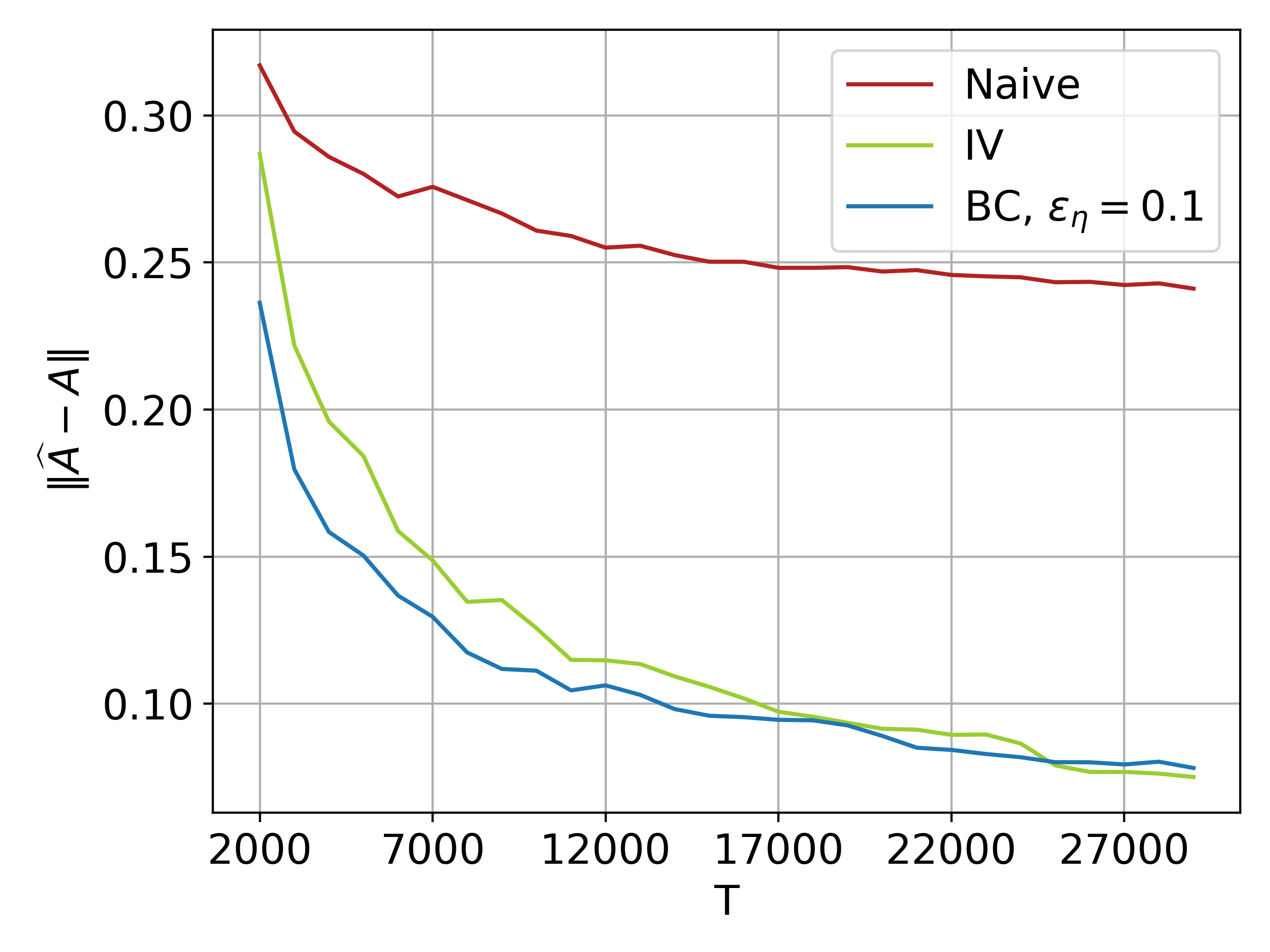}
    \end{minipage}
    \caption{\textit{Left:} Estimation error for the non-autonomous system; \textit{Right:} Estimation error for the autonomous system.}
    \label{fig:result}
\end{center}\end{figure}

\vspace{-10pt}\section{Conclusion}
In this paper, we focus on learning linear dynamical systems with observation noises. We propose two estimators and provide non-asymptotic guarantee of their performance. The IV estimator requires the dynamic matrix $A$ to be invertible and achieves an error upper bound of order $\tilde{\calO}(\sqrt{(m+n)/T})$. This order matches the best-known error upper bound for learning a fully observable system without observation noises. The BC estimator requires an estimation of the observation noise covariance and achieves an error upper bound of order $\tilde{\calO}(\sqrt{(m+n)/T}+\epsilon_{\eta})$, where $\epsilon_{\eta}$ is the error of the observation noise covariance estimation. 
% Superior performance of our estimators is verified in our simulations.

% \yy{simulation on \Cref{assmp:input}}.
% \begin{figure}[H]
%     \centering
%     \includegraphics[width=0.45\textwidth]{}
% \end{figure}

% \section{Conclusions}\label{sc:conclusion}
% % In this paper, we address the problem of learning high-dimensional linear dynamical systems with limited sensing. We propose an algorithm that learn such systems by sampling multiple trajectories, each measuring a possibly different set of state coordinates. The algorithm requires only a few sensors and does not require any prior knowledge on the system. We provide a non-asymptotic error upper bound for our algorithm, showing that the algorithm accurately recovers the system matrices as long as every state coordinate is measured in at least one trajectory. 
% % The accuracy of our algorithm is verified in our simulations.

\bibliography{ref}

\newpage\onecolumn
\appendices
\section{Proofs for \Cref{sc:iv}}
\begin{proof}[Proof of \Cref{lem:iv2}]
    Throughout this proof, we consider $\delta'\in(0,\frac{1}{3})$, which ensures $1\leq \log(k/\delta') \leq k\log(1/\delta')$ for all positive integer $k$.
    By definition, we have
    \begin{equation}\begin{split}\label{eq:covi_1}
        \sum_{t=1}^{T-1} i_ti_t\t = {}& \sum_{t=1}^{T-1} \begin{bmatrix}
            x_{t-1}\\
            u_t
        \end{bmatrix}\begin{bmatrix}
            x_{t-1}\\
            u_t
        \end{bmatrix}\t= \sum_{t=1}^{T-1} \begin{bmatrix}
            x_{t-1}x_{t-1}\t & 0\\
            0 & u_tu_t\t
        \end{bmatrix} + \begin{bmatrix}
            0 & x_{t-1}u_t\t\\
            u_tx_{t-1}\t & 0
        \end{bmatrix}.
    \end{split}\end{equation}

    We first show the signal terms, i.e. $\sum_{t=1}^{T-1} x_{t-1}x_{t-1}\t$ and $\sum_{t=1}^{T-1} u_tu_t\t$ are large. By a standard covariance concentration result (Equation (45) of \cite{proof_2}), we have the following inequalities for $T \gtrsim n + \log\b{\frac{1}{\delta'}}$, each with probability at least $1-\delta'$
\begin{equation}\begin{split}\label{eq:covi_5}
    \frac{3}{4}(T-1)\Sigma_u \preceq \sum_{t=1}^{T-1} u_tu_t\t \preceq \frac{5}{4}(T-1)\Sigma_u.
\end{split}\end{equation}
Moreover, by a state covariance lower bound result (Lemma A.3 in \cite{proof_1}) and upper bound result (\Cref{lem:cov_upper}), with probability at least $1-2\delta'$
\begin{equation}\begin{split}\label{eq:covi_6}
    \frac{\phi_u\phi_R^2}{8}(T-2)I \preceq \sum_{t=1}^{T-1} x_{t-1}x_{t-1}\t \precsim \frac{\psi\psi_A^2n}{1-\rho_A^2}TI\log\frac{1}{\delta'}.
\end{split}\end{equation}
for any $T$ satisfies
\begin{equation}\begin{split}
    T \gtrsim\frac{\psi^2\psi_A^4}{\phi_u^2\phi_R^4}n^3\log\b{\frac{n}{\delta'}}\log\b{\frac{\psi\psi_A^4}{1-\rho_A^2}n\log\frac{n}{\delta'}}.
\end{split}\end{equation}

Now we prove that the cross terms are small. We note that $\{x_{t-1}\}_{t=1}^{T-1}$ is a trajectory from the target system $\calM = (A,B,\Sigma_w, \Sigma_u,\Sigma_{\eta})$, and that $\{u_{t}\}_{t=1}^{T-1}$ a sequence of i.i.d gaussian vectors such that $u_t$ is independent of $x_{0:t-1}$. Therefore, we can apply \Cref{cor:cross_general}, an adapted version of the well-established self-normalizing bound (Theorem 1 in \cite{proof_3}), on the two sequences and get the following with probability at least $1-\delta'$
\begin{equation}\begin{split}\label{eq:covi_3}
    \norm{\sum_{t=1}^{T-1} x_{t-1}u_{t}\t } \lesssim {}& \sqrt{\frac{\psi\psi_A^2\psi_u}{1-\rho_A^2}n\max\{m,n\}T\log\b{\frac{1}{\delta'}}}\sqrt{\log\b{\frac{\psi\psi_A^2}{(1-\rho_A^2)\delta'}n\log\frac{1}{\delta'}}}.
\end{split}\end{equation}

Substituting \Cref{eq:covi_3,eq:covi_5,eq:covi_6} back into \Cref{eq:covi_1} gives the following for some absolute constant $c_1$ with probability at least $1-4\delta'$
\begin{equation}\begin{split}\label{eq:resvi}
    \sum_{t=1}^{T-1} i_ti_t\t \succeq {}& \sum_{t=1}^{T-1} \begin{bmatrix}
        x_{t-1}x_{t-1}\t & 0\\
        0 & u_tu_t\t
    \end{bmatrix} - \norm{\sum_{t=1}^{T-1}\begin{bmatrix}
        0 & x_{t-1}u_t\t\\
        u_tx_{t-1}\t & 0
    \end{bmatrix}}I\\
    \succeq {}& \min\left\{\frac{\phi_R^2}{8}, \frac{3}{4}\right\}\phi_u(T-2)I - c_1\sqrt{\frac{\psi^2\psi_A^2}{1-\rho_A^2}(m+n)^2\log\b{\frac{1}{\delta'}}}\sqrt{\log\b{\frac{\psi\psi_A^2}{(1-\rho_A^2)\delta'}n\log\frac{1}{\delta'}}}\sqrt{T}\\
    \succeq {}& \min\left\{\frac{\phi_R^2}{16}, \frac{3}{8}\right\}\phi_uTI.
\end{split}\end{equation}
Here the last line is by \Cref{eq:Tcond_iv} with $\delta=4\delta'$. Moreover,
\begin{equation}\begin{split}
    \sum_{t=1}^{T-1} i_ti_t\t \preceq {}& \norm{\sum_{t=1}^{T-1} \begin{bmatrix}
        x_{t-1}x_{t-1}\t & 0\\
        0 & u_tu_t\t
    \end{bmatrix}}I + \norm{\sum_{t=1}^{T-1}\begin{bmatrix}
        0 & x_{t-1}u_t\t\\
        u_tx_{t-1}\t & 0
    \end{bmatrix}}I\\
    \precsim {}& \max\left\{\frac{\psi\psi_A^2n}{1-\rho_A^2}T\log\frac{1}{\delta'}, \frac{5}{4}\psi_{u}T\right\}I + c_1\sqrt{\frac{\psi^2\psi_A^2}{1-\rho_A^2}(m+n)^2\log\b{\frac{1}{\delta'}}}\sqrt{\log\b{\frac{\psi\psi_A^2}{(1-\rho_A^2)\delta'}n\log\frac{1}{\delta'}}}\sqrt{T}\\
    \precsim {}& \frac{\psi\psi_A^2n}{1-\rho_A^2}TI\log\frac{1}{\delta'}.
\end{split}\end{equation}
Here the last line is by \Cref{eq:Tcond_iv} with $\delta=4\delta'$. 
Finally, let $\delta = 4\delta'$, and we get the desired result.
\end{proof}

\begin{proof}[Proof of \Cref{lem:iv3}]
    From Lemma A.9 in \cite{proof_1}, we know that the following holds for some vector $v\in\mathbb{S}^{b-1}$
    \begin{equation}\label{eq:thm1_1}\begin{split}
        \bbP\b{\norm{\b{\overline{\Sigma}_y}^{-\frac{1}{2}} \sum_{t\in[T]} y_t\zeta_{t}\t}>z} \leq 5^b \bbP\b{\norm{\b{\overline{\Sigma}_y}^{-\frac{1}{2}} \sum_{t\in[T]} y_t\zeta_{t}\t v}>\frac{z}{2}}.
    \end{split}\end{equation}
    Notice that $\zeta_{t}\t v$ are independent Gaussian variables from distribution $\calN(0, v\t \Sigma_{\zeta}v)$ for all $t\in[T]$, which is $c_{1}\sqrt{\psi_\zeta}$-subGaussian for some absolute constant $c_{1}$. Then with filtration $\{\calF_t = \sigma\b{y_1, \cdots, y_{t+1}, \eta_1, \cdots, \eta_t}\}_{t\in[T]}$\footnote{Here we slightly abuse the notation and let $\sigma(\cdot)$ denote the sigma-algebra generated by $\cdot$.}, we know that $y_t$ is $\calF_{t-1}$-measurable while $\eta_t\t v$ is $\calF_t$-measurable and sub-gaussian conditioned on $\calF_{t-1}$. We can then apply Theorem 1 in \cite{proof_3} on sequence $\{y_t\}_{t\in[T]}$ and sequence $\{\zeta_{t}\t v\}_{t\in[T]}$, gives the following inequality for any $\delta'\in(0,1)$
    \begin{equation}\begin{split}
        \bbP\b{\norm{\b{\overline{\Sigma}_y}^{-\frac{1}{2}} \sum_{t\in[T]} y_t\zeta_{t}\t v} >  \sqrt{2c_{1}^2\psi_\zeta  \log\b{\frac{\det(\overline{\Sigma}_y)^{\frac{1}{2}}\det( T  I)^{-\frac{1}{2}}}{\delta'}}}} \leq \delta'.
    \end{split}\end{equation}
    Substituting the above result back into \Cref{eq:thm1_1} gives the following inequality
    \begin{equation}\begin{split}
        \bbP\b{\norm{\b{\overline{\Sigma}_y}^{-\frac{1}{2}} \sum_{t\in[T]} y_t\zeta_{t}\t} >  \sqrt{8c_{1}^2\psi_\zeta  \log\b{\frac{\det(\overline{\Sigma}_y)^{\frac{1}{2}}\det( T  I)^{-\frac{1}{2}}}{\delta'}}}} \leq 5^b\delta',
    \end{split}\end{equation}
    which implies the following inequality holds with probability at least $1-\delta$ for any $\delta=5^b\delta'\in(0,1)$
    \begin{equation}\label{eq:cross_general_2}\begin{split}
        \norm{\b{\overline{\Sigma}_y}^{-\frac{1}{2}} \sum_{t\in[T]} y_t\zeta_{t}\t} \leq {}& \sqrt{8c_{1}^2\psi_\zeta\log\b{\frac{2\det(\overline{\Sigma}_y)^{\frac{1}{2}}\det( T  I)^{-\frac{1}{2}}}{\delta}} + 8c_{1}^2\psi_\zeta b\log5}\\
        \lesssim {}& \sqrt{\psi_\zeta\log\b{\frac{2\det(\overline{\Sigma}_y)^{\frac{1}{2}}\det( T  I)^{-\frac{1}{2}}}{\delta}} + \psi_\zeta b\log5}\\
        \leq {}& \sqrt{\psi_\zeta\log\b{\frac{2(\sigma_1(\sum_{t=1}^{T}y_ty_t\t)/T+1)^{\frac{n}{2}}}{\delta}} + \psi_\zeta b\log5}\\
        \leq {}& \sqrt{n\psi_\zeta\log\b{\frac{2(\sigma_1(\sum_{t=1}^{T}y_ty_t\t)/T+1)}{\delta}} + \psi_\zeta b\log5}\\
        \lesssim {}& \sqrt{\max\{a,b\}\psi_\zeta\log\b{\frac{5(\sigma_1(\sum_{t=1}^{T}y_ty_t\t)/T+1)}{\delta}}}.
    \end{split}\end{equation}
    This finishes the proof of \Cref{eq:iv3_1}

    \textbf{Now we prove \Cref{eq:iv3_2}.} Since $\Sigma_y$ is a p.d matrix, we can decompose it as $\sum_{i=1}^{n}\sigma_i\mu_i\mu_i\t$, where $\sigma_1\geq \sigma_2\geq\cdots\geq\sigma_{n}$ are the eigenvalues and $\{\mu_i\}_{i=1}^{n}$ are the corresponding eigenvectors that form an orthonormal basis of $\bbR^{n}$. Therefore, $\overline{\Sigma}_y = \Sigma_y+ TI = \Sigma_y + T\sum_{i=1}^{n}\mu_i\mu_i\t = \sum_{i=1}^{n} (\sigma_i+T)\mu_i\mu_i\t$. As a direct consequence, $\sigma_1\b{\Sigma_y^{-1}\overline{\Sigma}_y} = \max_i (\sigma_i+T)/\sigma_i = (\sigma_n+T)/\sigma_n$. 
    Substituting back gives
    \begin{equation}\begin{split}
        \norm{\b{\Sigma_y}^{-1} \sum_{t\in[T]} y_t\zeta_{t}\t} \leq {}& \norm{\Sigma_y^{-1}\overline{\Sigma}_y}\norm{\overline{\Sigma}_y^{-\frac{1}{2}}}\norm{\overline{\Sigma}_y^{-\frac{1}{2}} \sum_{t\in[T]} y_t\zeta_{t}\t}\\
        \lesssim {}& \frac{\sigma_n+T}{\sigma_n}\frac{1}{\sqrt{\sigma_n+T}}\sqrt{\max\{a,b\}\psi_\zeta\log\b{\frac{5(\sigma_1/T+1)}{\delta}}}.
    \end{split}\end{equation}
\end{proof}

\begin{corollary}\label{cor:cross_general}
    Consider the setting of \Cref{lem:iv3}.
    If $\{y_t\}_{t=1}^T$ is a trajectory from system $(A,B,\Sigma_w,\Sigma_u,\Sigma_{\eta})$ satisfying \Cref{assmp3:stable} with constants $\psi_A$ and $\rho_A$, then with probability $1-\delta$ for any $\delta\in(0,1)$
    \begin{equation}\begin{split}\label{eq:cross_general_1}
        \norm{\sum_{t=1}^{T} y_t\zeta_t\t} \lesssim \sqrt{\frac{\psi\psi_A^2\psi_\zeta}{1-\rho_A^2} n\max\{a,b\}T\log\frac{3}{\delta}\log\b{\frac{3\psi\psi_A^2n}{(1-\rho_A^2)\delta}\log\frac{3}{\delta}}}.
    \end{split}\end{equation}
    Here $\psi \coloneqq \psi_B^2\psi_u+\psi_w$, $\psi_B \coloneqq \max\{\norm{B},1\}$, $\psi_u \coloneqq \max\{\norm{\Sigma_u},1\}$, $\psi_w \coloneqq \max\{\norm{\Sigma_w},1\}$, $\psi_{\zeta} = \sigma_1\b{\Sigma_\zeta}$.

    If $\{y_t\}_{t=1}^T$ is a sequence of i.i.d. Gaussian noises from $\calN(0,\Sigma_y)$, then with probability at least $1-\delta$ for any $\delta\in(0,1)$
    \begin{equation}\begin{split}\label{eq:cross_general_3}
        \norm{\sum_{t=1}^{T} y_t\zeta_t\t} \lesssim \sqrt{\psi_y\psi_\zeta \max\{a,b\}T\log\b{\frac{3\psi_y}{\delta}}}.
    \end{split}\end{equation}
    Here $\psi_y = \max\{\norm{\Sigma_y}, 1\}$.
\end{corollary}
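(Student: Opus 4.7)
The plan is to deduce both inequalities from Lemma~\ref{lem:iv3} by absorbing the $\overline{\Sigma}_y^{-1/2}$ factor on the left into an operator-norm estimate on $\Sigma_y$ itself. Specifically, I would start from the simple bound
\begin{equation*}
\norm{\sum_{t=1}^{T} y_t\zeta_t\t}
\leq \norm{\overline{\Sigma}_y^{1/2}}\cdot \norm{\overline{\Sigma}_y^{-1/2}\sum_{t=1}^{T} y_t\zeta_t\t}
\leq \sqrt{\sigma_1(\Sigma_y)+T}\cdot \norm{\overline{\Sigma}_y^{-1/2}\sum_{t=1}^{T} y_t\zeta_t\t},
\end{equation*}
and then apply \Cref{eq:iv3_1} of \Cref{lem:iv3} to control the second factor. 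What remains is to obtain, with high probability, an upper bound on $\sigma_1(\Sigma_y)=\sigma_1\!\left(\sum_{t=1}^T y_ty_t\t\right)$ in each of the two settings and combine the two high-probability events with a union bound.

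For \Cref{eq:cross_general_1}, where $\{y_t\}$ is a trajectory of the stable linear system, the state covariance upper bound (the same \Cref{lem:cov_upper} invoked in the proof of \Cref{lem:iv2}) gives
\begin{equation*}
\sigma_1(\Sigma_y) \precsim \frac{\psi\psi_A^2 n}{1-\rho_A^2}\, T\,\log\frac{1}{\delta}
\end{equation*}
with probability at least $1-\delta$. Substituting this into the displayed inequality, and noting that $\sigma_1/T+1$ in the logarithm of \Cref{eq:iv3_1} is dominated by the same quantity up to a constant, collapses the two log factors into the nested form $\log(3/\delta)\cdot\log\!\bigl(\tfrac{3\psi\psi_A^2 n}{(1-\rho_A^2)\delta}\log\tfrac{3}{\delta}\bigr)$ that appears on the right-hand side of \Cref{eq:cross_general_1}.

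For \Cref{eq:cross_general_3}, the argument is identical except that the state covariance bound is replaced by standard Gaussian sample covariance concentration (e.g.\ Equation~(45) of \cite{proof_2}), which, for $T$ above an absolute constant multiple of $a+\log(1/\delta)$, gives $\sigma_1(\Sigma_y)\leq \tfrac{5}{4}\psi_y T$ with probability at least $1-\delta$. The $\log\!\bigl(5(\sigma_1/T+1)/\delta\bigr)$ term in \Cref{eq:iv3_1} then simplifies to $\log(3\psi_y/\delta)$, yielding the stated rate.

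I do not expect any real obstacle beyond careful bookkeeping of constants and logarithmic factors. The only subtlety is the union bound over the two high-probability events (the covariance upper bound and the self-normalized inequality), and the resulting $\delta$-rescaling by an absolute constant, which is already hidden by the $\lesssim$ notation in the statement.
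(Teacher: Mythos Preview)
Your proposal is correct and matches the paper's proof essentially step for step: the paper also bounds $\norm{\sum_t y_t\zeta_t\t}\leq\norm{\overline{\Sigma}_y^{1/2}}\cdot\norm{\overline{\Sigma}_y^{-1/2}\sum_t y_t\zeta_t\t}$, invokes \Cref{eq:iv3_1} for the second factor, and controls $\sigma_1(\Sigma_y)$ via \Cref{lem:cov_upper} in the trajectory case and via Equation~(45) of \cite{proof_2} in the i.i.d.\ case, then union-bounds and rescales $\delta$.
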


\begin{proof}
    
    \textbf{Now we prove \Cref{eq:cross_general_1}.} If $\{y_t\}_{t=1}^T$ is a trajectory from system $(A,B,\Sigma_w,\Sigma_u,\Sigma_{\eta})$, then with probability at least $1-\delta'$ for any $\delta'\in(0,e^{-1})$
    \begin{equation}\begin{split}
        \norm{ \sum_{t=0}^T y_ty_t\t} \lesssim \frac{\psi\psi_A^2n}{1-\rho_A^2} T \log\frac{1}{\delta'}.
    \end{split}\end{equation}
    Combining with \Cref{eq:iv3_1} ($\delta=2\delta'$), we get the following with probability at least $1-3\delta'$ for any $\delta'\in(0,e^{-1})$
    \begin{equation}\begin{split}
        \norm{\sum_{t\in[T]}y_t\zeta_t}
        \lesssim {}& \norm{\b{\overline{\Sigma}_y}^{\frac{1}{2}}}\sqrt{\max\{a,b\}\psi_\zeta\log\b{\frac{5(\sigma_1(\sum_{t=1}^{T}y_ty_t\t)/T+1)}{2\delta'}}}\\
        \lesssim {}& \sqrt{\frac{\psi\psi_A^2n}{1-\rho_A^2} T \log\frac{1}{\delta'}}\sqrt{\max\{a,b\}\psi_\zeta\log\b{5\frac{\frac{\psi\psi_A^2n}{1-\rho_A^2} \log\frac{1}{\delta'}+1}{2\delta'}}}\\
        \lesssim {}& \sqrt{\frac{\psi\psi_A^2\psi_\zeta}{1-\rho_A^2} n\max\{a,b\}T\log\frac{1}{\delta'}\log\b{\frac{\psi\psi_A^2n}{(1-\rho_A^2)\delta'}\log\frac{1}{\delta'}}}.
    \end{split}\end{equation}
    Finally, letting $\delta = 3\delta'$ gives the desired result.

    \textbf{Now we prove \Cref{eq:cross_general_3}.}  If $\{y_t\}_{t=1}^T$ is a sequence of i.i.d. Gaussian noises from $\calN(0,\Sigma_y)$, then by a standard covariance concentration result (Equation (45) of \cite{proof_2}), we have the following inequality for $T \gtrsim n + \log\b{\frac{1}{\delta'}}$ with probability at least $1-\delta'$
    \begin{equation}\begin{split}
        \frac{3}{4}T\Sigma_y \preceq \sum_{t=0}^{T-1} y_ty_t\t \preceq \frac{5}{4}T\Sigma_y.
    \end{split}\end{equation}
    This implies $\sigma_1(\sum_{t=1}^{T}y_ty_t\t) \leq 5T\sigma_1\b{\Sigma_y}/4$.
    Combining with \Cref{eq:iv3_1} ($\delta=2\delta'$), we get the following with probability at least $1-3\delta'$ for any $\delta'\in(0,e^{-1})$
    \begin{equation}\begin{split}
        \norm{\sum_{t\in[T]}y_t\zeta_t}
        \lesssim {}& \norm{\b{\overline{\Sigma}_y}^{\frac{1}{2}}}\sqrt{\max\{a,b\}\psi_\zeta\log\b{\frac{5(\sigma_1(\sum_{t=1}^{T}y_ty_t\t)/T+1)}{2\delta'}}}\\
        \lesssim {}& \sqrt{T\psi_y}\sqrt{\max\{a,b\}\psi_\zeta\log\b{5\frac{5\psi_y/4+1}{2\delta'}}}\\
        \lesssim {}& \sqrt{\psi_y\psi_\zeta \max\{a,b\}T\log\b{\frac{\psi_y}{\delta'}}}.
    \end{split}\end{equation}
    Finally, letting $\delta = 3\delta'$ gives the desired result.
\end{proof}

\begin{proof}[Proof of \Cref{lem:iv1}]
    Throughout this proof, we consider $\delta'\in(0,\frac{1}{3})$, which ensures $1\leq \log(k/\delta') \leq k\log(1/\delta')$ for all positive integer $k$.
    By definition, we have
\begin{equation}\begin{split}\label{eq:zi_1}
    \htz_{t} = {}& \begin{bmatrix}
        \htx_{t}\\
        u_{t}
    \end{bmatrix} = \begin{bmatrix}
        Ax_{t-1} + Bu_{t-1} + w_{t-1} + \eta_t\\
        u_t
    \end{bmatrix}\\
    = {}& \begin{bmatrix}
        A & 0\\
        0 & I
    \end{bmatrix}i_t + \begin{bmatrix}
        Bu_{t-1} + w_{t-1} + \eta_t\\
        0
    \end{bmatrix}.
\end{split}\end{equation}
Then,
\begin{equation}\begin{split}\label{eq:zi_2}
    \sum_{t=1}^{T-1}\htz_t \hti_t\t = {}& \sum_{t=1}^{T-1} \b{\begin{bmatrix}
        A & 0\\
        0 & I
    \end{bmatrix}i_t + \begin{bmatrix}
        Bu_{t-1} + w_{t-1} + \eta_t\\
        0
    \end{bmatrix}}\b{i_t+\begin{bmatrix}
        \eta_{t-1}\\
        0
    \end{bmatrix}}\t\\
    = {}& \begin{bmatrix}
        A & 0\\
        0 & I
    \end{bmatrix}\sum_{t=1}^{T-1} i_ti_t\t + \sum_{t=1}^{T-1} \begin{bmatrix}
        Bu_{t-1} + w_{t-1} + \eta_t\\
        0
    \end{bmatrix}\begin{bmatrix}
        x_{t-1}\\
        u_t
    \end{bmatrix}\t\\
    {}& + \begin{bmatrix}
        A & 0\\
        0 & I
    \end{bmatrix}\sum_{t=1}^{T-1}\begin{bmatrix}
        x_{t-1}\\
        u_t
    \end{bmatrix} \begin{bmatrix}
        \eta_{t-1}\\
        0
    \end{bmatrix}\t + \sum_{t=1}^{T-1} \begin{bmatrix}
        Bu_{t-1} + w_{t-1} + \eta_t\\
        0
    \end{bmatrix}\begin{bmatrix}
        \eta_{t-1}\\
        0
    \end{bmatrix}\t
\end{split}\end{equation}
Here the first term can be seen as the signal term, while the rest terms can be seen as noise terms. 

Now we bound the noise terms using \Cref{cor:cross_general}, an adapted version of the well-established self-normalizing bound (Theorem 1 in \cite{proof_3}). We first consider $\sum_{t=1}^{T-1} \eta_{t-1}\eta_{t}\t$, which is part of the third noise term. 
Applying \Cref{cor:cross_general}, or more specifically \Cref{eq:cross_general_3}, on sequences $\{\eta_{t-1}\}_{t=1}^{T-1}$ and $\{\eta_t\}_{t\in[T-1]}$ gives the following with probability at least $1-\delta'$
\begin{equation}\begin{split}\label{eq:cross1}
    \norm{\sum_{t=1}^{T-1} \eta_{t-1}\eta_t\t} \lesssim \sqrt{\psi_{\eta}^2nT\log\b{\frac{\psi_\eta n}{\delta'}}}.
\end{split}\end{equation}
Similarly, we have the following inequalities, each with probability at least $1-\delta'$,
\begin{equation}\begin{split}\label{eq:cross2}
    \norm{\sum_{t=1}^{T-1} \eta_{t-1}u_t\t}, \norm{\sum_{t=1}^{T-1} \eta_{t-1}u_{t-1}\t} \lesssim {}& \sqrt{\psi_{u}\psi_{\eta}nT\log\b{\frac{\psi_{\eta} n}{\delta'}}},\\
    \norm{\sum_{t=1}^{T-1} \eta_{t-1}w_{t-1}\t} \lesssim {}& \sqrt{\psi_w\psi_{\eta}nT\log\b{\frac{\psi_{\eta} n}{\delta'}}},\\
    \norm{\sum_{t=1}^{T-1} u_{t-1}u_t\t} \lesssim {}& \sqrt{\psi_{u}^2nT\log\b{\frac{\psi_u n}{\delta'}}}\\
    \norm{\sum_{t=1}^{T-1} u_{t}w_{t-1}\t} \lesssim {}& \sqrt{\psi_{u}\psi_wnT\log\b{\frac{\psi_u n}{\delta'}}}\\
\end{split}\end{equation}
Additionally, notice that $\{x_t\}_{t=0}^{T-2}$ is generated by the target system $\calM = (A,B,\Sigma_w, \Sigma_u,\Sigma_{\eta})$. We apply \Cref{cor:cross_general}, or more specifically \Cref{eq:cross_general_1}, on $\{x_t\}_{t=0}^{T-2}$ and $\{\eta_{t}\}_{t=0}^{T-2}$ gives the following, each with probability at least $1-\delta'$
\begin{equation}\begin{split}\label{eq:cross3}
    \norm{\sum_{t=0}^{T-2} x_{t}\eta_{t}\t }\lesssim {}& \sqrt{\frac{\psi\psi_A^2\psi_\eta}{1-\rho_A^2}n^2T\log\b{\frac{1}{\delta'}}}\sqrt{\log\b{\frac{\psi\psi_A^2}{(1-\rho_A^2)\delta'}n\log\frac{1}{\delta'}}}.
\end{split}\end{equation}
% Here recall that $\psi_A,\rho_A$ are from \Cref{assmp3:stable} and $\psi_{\tilw} \coloneqq \psi_B^2\psi_u + \psi_w$, where $\psi_B = \norm{B}$. 
Similarly,
\begin{equation}\begin{split}\label{eq:cross4}
    \norm{\sum_{t=0}^{T-2} x_{t}\b{Bu_{t} + w_{t} + \eta_{t+1}}\t }\lesssim {}& \sqrt{\frac{\psi\psi_A^2\b{\psi_\eta+\psi}}{1-\rho_A^2}n^2T\log\b{\frac{1}{\delta'}}}\sqrt{\log\b{\frac{\psi\psi_A^2}{(1-\rho_A^2)\delta'}n\log\frac{1}{\delta'}}}.
\end{split}\end{equation}
By definition $\psi = \max\{\psi_B^2\psi_u+\psi_w, \psi_\eta\}$, all the above noise terms can be upper bounded by the following term up to some absolute constant
\begin{equation}\begin{split}
    \sqrt{\frac{\psi^2\psi_A^2}{1-\rho_A^2}(m+n)^2T\log\b{\frac{1}{\delta'}}}\sqrt{\log\b{\frac{\psi\psi_A^2}{(1-\rho_A^2)\delta'}(m+n)\log\frac{1}{\delta'}}}.
\end{split}\end{equation}
Combining \Cref{eq:cross1,eq:cross2,eq:cross3,eq:cross4}, we have the following with probability at least $1-8\delta'$,
\begin{equation}\begin{split}
    {}& \norm{\sum_{t=1}^{T-1} \begin{bmatrix}
        Bu_{t-1} + w_{t-1} + \eta_t\\
        0
    \end{bmatrix}\begin{bmatrix}
        x_{t-1}\\
        u_t
    \end{bmatrix}\t}, \norm{\sum_{t=1}^{T-1}\begin{bmatrix}
        x_{t-1}\\
        u_t
    \end{bmatrix} \begin{bmatrix}
        \eta_{t-1}\\
        0
    \end{bmatrix}\t}, \norm{\sum_{t=1}^{T-1} \begin{bmatrix}
        Bu_{t-1} + w_{t-1} + \eta_t\\
        0
    \end{bmatrix}\begin{bmatrix}
        \eta_{t-1}\\
        0
    \end{bmatrix}\t}\\
    \lesssim {}& \sqrt{\frac{\psi^2\psi_A^2}{1-\rho_A^2}(m+n)^2T\log\b{\frac{1}{\delta'}}}\sqrt{\log\b{\frac{\psi\psi_A^2}{(1-\rho_A^2)\delta'}(m+n)\log\frac{1}{\delta'}}}.
\end{split}\end{equation}

Now let $\Sigma_i \coloneqq \sum_{t=1}^{T-1}i_ti_t\t$ . Substituting back into \Cref{eq:zi_2} gives the following with probability at least $1-8\delta'$
\begin{equation}\begin{split}\label{eq:coiv_7}
    {}& \sigma_{m+n}\b{\b{\sum_{t=1}^{T-1}\htz_t \hti_t\t}\Sigma_i^{-1}}\\
    \geq {}& \sigma_{m+n}\b{\begin{bmatrix}
        A & 0\\
        0 & I
    \end{bmatrix}} - \sigma_1\b{\sum_{t=1}^{T-1} \begin{bmatrix}
        Bu_{t-1} + w_{t-1} + \eta_t\\
        0
    \end{bmatrix}\begin{bmatrix}
        x_{t-1}\\
        u_t
    \end{bmatrix}\t\Sigma_i^{-1}}\\
    {}& - \sigma_1\b{\sum_{t=1}^{T-1}\begin{bmatrix}
        x_{t-1}\\
        u_t
    \end{bmatrix} \begin{bmatrix}
        \eta_{t-1}\\
        0
    \end{bmatrix}\t\Sigma_i^{-1}} - \sigma_1\b{\sum_{t=1}^{T-1} \begin{bmatrix}
        Bu_{t-1} + w_{t-1} + \eta_t\\
        0
    \end{bmatrix}\begin{bmatrix}
        \eta_{t-1}\\
        0
    \end{bmatrix}\t\Sigma_i^{-1}}\\
    \geq {}& \min\{\phi_A,1\} - \norm{\sum_{t=1}^{T-1} \begin{bmatrix}
        Bu_{t-1} + w_{t-1} + \eta_t\\
        0
    \end{bmatrix}\begin{bmatrix}
        x_{t-1}\\
        u_t
    \end{bmatrix}\t}\norm{\Sigma_i^{-1}}\\
    {}& - \norm{\sum_{t=1}^{T-1}\begin{bmatrix}
        x_{t-1}\\
        u_t
    \end{bmatrix} \begin{bmatrix}
        \eta_{t-1}\\
        0
    \end{bmatrix}\t}\norm{\Sigma_i^{-1}} - \norm{\sum_{t=1}^{T-1} \begin{bmatrix}
        Bu_{t-1} + w_{t-1} + \eta_t\\
        0
    \end{bmatrix}\begin{bmatrix}
        \eta_{t-1}\\
        0
    \end{bmatrix}\t}\norm{\Sigma_i^{-1}}\\
    % \geq {}& \min\{\phi_A, 1\} - c_1\sqrt{\frac{\psi^2\psi_A^2}{1-\rho_A^2}(m+n)^2T\log\b{\frac{1}{\delta'}}}\sqrt{\log\b{\frac{\psi\psi_A^2}{(1-\rho_A^2)\delta'}(m+n)\log\frac{1}{\delta'}}}\sigma_1\b{\overline{\Sigma}_i^{-1}}\\
    % {}& - \frac{5}{4}\psi_A\psi_{\eta}T \sigma_1\b{\overline{\Sigma}_i^{-1}}.
\end{split}\end{equation}
By \Cref{lem:iv2}, $\sigma_{m+n}\b{\Sigma_i} \geq \min\{\phi_R^2/16,3/8\}\phi_uT$ with probability at least $1-\delta'$ under \Cref{eq:Tcond_iv} with $\delta=9\delta'$. Substituting back into \Cref{eq:coiv_7} gives the following with probability at least $1-9\delta'$
\begin{equation}\begin{split}\label{eq:coiv_8}
    {}& \sigma_{m+n}\b{\b{\sum_{t=1}^{T-1}\htz_t \hti_t\t}\Sigma_i^{-1}}\\
    \geq {}& \min\{\phi_A,1\}\\
    {}& - \dfrac{c_1}{\min\{\phi_R^2/16,3/8\}\phi_u}\sqrt{\frac{\psi^2\psi_A^2}{1-\rho_A^2}(m+n)^2\log\b{\frac{1}{\delta'}}}\sqrt{\log\b{\frac{\psi\psi_A^2}{(1-\rho_A^2)\delta'}(m+n)\log\frac{1}{\delta'}}}\sqrt{\frac{1}{T}}\\
    \overset{(i)}{\geq} {}& \frac{\min\{\phi_A,1\}}{2}.
\end{split}\end{equation}
Here $(i)$ is by \Cref{eq:Tcond_iv} with $\delta=9\delta'$. Finally, let $\delta = 9\delta'$, and we get the desired result.

\end{proof}

\subsection{Supporting Details}
The following lemmas hold for state trajectory $\{x_t\}_{t=0}^T$ from system $\calM = (A,B, \Sigma_w, \Sigma_u, \Sigma_{\eta})$.
\begin{lemma}[Lemma A.3 in \cite{proof_1}]\label{lem:cov}
    Suppose $A\in\bbR^{n\times n}, B\in\bbR^{n\times m}$ satisfy \Cref{assmp3:stable} with constants $\psi_A,\rho_A$ and \Cref{assmp2:controllable} with constant $\phi_R$. If $T \gtrsim \frac{\psi^2\psi_A^4}{\phi_u^2\phi_R^4}n^3\log\b{\frac{n}{\delta}}\log\b{\frac{\psi\psi_A^4}{1-\rho_A^2}n\log\frac{n}{\delta}}$, then the following holds with probability at least $1-\delta$ for any $\delta\in(0,1)$
    \begin{equation}\begin{split}
        \sum_{t=1}^{T} x_tx_t\t \succeq \frac{\phi_u\phi_R^2}{8}TI.
    \end{split}\end{equation}
    Here $\psi \coloneqq \psi_B^2\psi_u+\psi_w$, $\psi_B \coloneqq \max\{\norm{B},1\}$, $\psi_u \coloneqq \max\{\norm{\Sigma_u},1\}$, $\psi_w \coloneqq \max\{\norm{\Sigma_w},1\}$, $\phi_u \coloneqq \sigma_m\b{\Sigma_u}$. 
\end{lemma}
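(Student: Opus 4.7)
The plan is to prove this state-covariance lower bound in two stages: first establish the bound in expectation via a controllability-based block decomposition, and then upgrade it to a high-probability statement via concentration. The claim is quoted verbatim from \cite{proof_1} and is standard in non-asymptotic system identification, so the sketch below follows the now-canonical blueprint from that literature.

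For the expectation bound, I would fix any $t \geq n$ and split the state into its ``old'' part and its last-$n$ innovation:
\[x_t = A^n x_{t-n} + v_t, \qquad v_t \coloneqq \sum_{s=0}^{n-1} A^s \bigl(B u_{t-1-s} + w_{t-1-s}\bigr).\]
Since $v_t$ is independent of $x_{t-n}$ and has zero mean, $\bbE[x_t x_t\t] \succeq \bbE[v_t v_t\t]$. A direct computation gives
\[\bbE[v_t v_t\t] = \sum_{s=0}^{n-1} A^s (B\Sigma_u B\t + \Sigma_w)(A^s)\t \succeq R (I_n \otimes \Sigma_u) R\t \succeq \phi_u \phi_R^2 I,\]
where the inner inequality drops the nonnegative $\Sigma_w$ contribution and invokes controllability (\Cref{assmp2:controllable}). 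Summing over $t$ yields $\sum_{t=1}^T \bbE[x_t x_t\t] \succeq \tfrac{1}{2}\phi_u \phi_R^2 T I$ once $T \geq 2n$.

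For the high-probability step I would use a small-ball argument. For each unit vector $\nu \in \bbR^n$ the scalar sequence $\{\nu\t x_t\}_t$ decomposes as a predictable past plus the fresh innovation $\nu\t v_t$, and the innovation is Gaussian with variance at least $\phi_u \phi_R^2$. A block-martingale anti-concentration then shows that $\sum_t (\nu\t x_t)^2 \gtrsim \phi_u \phi_R^2 T$ with high probability, and an $\epsilon$-net over the unit sphere combined with a Lipschitz argument promotes this scalar statement to the matrix inequality $\sum_t x_t x_t\t \succeq (\phi_u \phi_R^2 / 8) T I$. Making the discretization quantitative requires a uniform a priori upper bound on $\|x_t\|^2$ of order $\psi \psi_A^2 n /(1-\rho_A^2) \cdot \log(1/\delta)$, which is a standard consequence of \Cref{assmp3:stable}.

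The main obstacle is the concentration step rather than the expectation bound. Because the states are unbounded and strongly temporally correlated through powers of $A$, neither matrix Hoeffding nor a direct martingale covariance concentration applies out of the box; the small-ball anti-concentration argument, together with careful bookkeeping of the condition number $\psi_A^2/\phi_R^2$ and the dimension factor picked up in the $\epsilon$-net, is what produces the specific $T \gtrsim \frac{\psi^2 \psi_A^4}{\phi_u^2 \phi_R^4} n^3 \log(n/\delta)\log(\cdots)$ sample requirement stated in the lemma. Since this is exactly the bound proved as Lemma A.3 of \cite{proof_1}, I would ultimately invoke that result rather than redo the full derivation.
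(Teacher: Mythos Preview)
Your proposal is correct and matches the paper's treatment: the paper does not prove this lemma at all but simply states it as a citation of Lemma~A.3 in \cite{proof_1}, and you likewise conclude by invoking that result directly. Your additional sketch of the small-ball/controllability argument is accurate and helpful context, but it goes beyond what the paper itself provides.
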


\begin{lemma}\label{lem:cov_upper}[Adapted from Corollary 8.2 in \cite{proof_2}]
    Suppose $A$ satisfies \Cref{assmp3:stable} with constant $\psi_A,\rho_A$. Then the following holds with probability at least $1-\delta$ for any $\delta\in(0,e^{-1})$
    \begin{equation}\begin{split}
        \norm{ \sum_{t=0}^T x_tx_t\t} \lesssim \frac{\psi\psi_A^2n}{1-\rho_A^2} T \log\frac{1}{\delta}.
    \end{split}\end{equation}
    Here $\psi \coloneqq \psi_B^2\psi_u+\psi_w$, $\psi_B \coloneqq \max\{\norm{B},1\}$, $\psi_u \coloneqq \max\{\norm{\Sigma_u},1\}$, $\psi_w \coloneqq \max\{\norm{\Sigma_w},1\}$. 
\end{lemma}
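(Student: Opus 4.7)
The plan is to reduce the operator-norm bound to a scalar concentration problem via $\norm{M}\leq \mathrm{tr}(M)$ for symmetric positive semidefinite $M$ (costing at most a factor of $n$) and then apply a Hanson--Wright / Gaussian chi-square tail. Stack the whole trajectory as $\vec X\coloneqq [x_0;x_1;\cdots;x_T]\in\bbR^{(T+1)n}$. Since $x_0=0$ and all inputs/noises are Gaussian, $\vec X\sim\calN(0,\tilde\Gamma)$ for some covariance $\tilde\Gamma$, and
\begin{equation*}
\norm{\sum_{t=0}^T x_tx_t\t}\leq \mathrm{tr}\bigl(\sum_{t=0}^T x_tx_t\t\bigr)=\norm{\vec X}^2,
\end{equation*}
so it suffices to show $\norm{\vec X}^2\lesssim nT\psi\psi_A^2\log(1/\delta)/(1-\rho_A^2)$ with high probability.

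The key ingredient is a uniform bound on $\bbE\norm{\vec X}^2=\mathrm{tr}(\tilde\Gamma)$. Unrolling the dynamics gives $x_t=\sum_{s=0}^{t-1}A^{t-1-s}(Bu_s+w_s)$, so $\Gamma_t\coloneqq\bbE x_tx_t\t=\sum_{s=0}^{t-1}A^s(B\Sigma_uB\t+\Sigma_w)(A\t)^s$. Combining \Cref{assmp3:stable} (which yields $\norm{A^s}^2\leq \psi_A^2\rho_A^{2(s-1)}$) with $\norm{B\Sigma_uB\t+\Sigma_w}\leq \psi_B^2\psi_u+\psi_w=\psi$ gives $\norm{\Gamma_t}\lesssim \psi\psi_A^2/(1-\rho_A^2)$ uniformly in $t$, so $\mathrm{tr}(\tilde\Gamma)=\sum_{t=0}^T\mathrm{tr}(\Gamma_t)\leq n(T+1)\norm{\Gamma_t}\lesssim nT\psi\psi_A^2/(1-\rho_A^2)$.

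Finally, a standard Hanson--Wright tail for Gaussian quadratic forms states that for any $u>0$,
\begin{equation*}
\bbP\bigl(\norm{\vec X}^2>\mathrm{tr}(\tilde\Gamma)+2\sqrt{\norm{\tilde\Gamma}_F^2\,u}+2\norm{\tilde\Gamma}u\bigr)\leq e^{-u}.
\end{equation*}
Using the crude bounds $\norm{\tilde\Gamma}_F^2\leq \mathrm{tr}(\tilde\Gamma)\norm{\tilde\Gamma}$ and $\norm{\tilde\Gamma}\leq \mathrm{tr}(\tilde\Gamma)$, together with $u=\log(1/\delta)\geq 1$ (since $\delta<e^{-1}$), the right-hand side collapses to $\mathrm{tr}(\tilde\Gamma)(1+2\sqrt{u}+2u)\lesssim \mathrm{tr}(\tilde\Gamma)\log(1/\delta)\lesssim nT\psi\psi_A^2\log(1/\delta)/(1-\rho_A^2)$, matching the claim. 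The main obstacle is the strong temporal correlation among the $x_t$'s, which blocks direct i.i.d.\ covariance concentration (as in Equation (45) of \cite{proof_2}); the resolution is to collapse the whole trajectory into one Gaussian vector $\vec X$, reduce to a scalar quadratic form $\norm{\vec X}^2$ via $\norm{\cdot}\leq\mathrm{tr}(\cdot)$ (paying the cheap $n$ factor), and close with Hanson--Wright around the trace. A sharper refinement (not needed here) would be to bound $\norm{\tilde\Gamma}$ directly by $O(\psi\psi_A^2/(1-\rho_A)^2)$ via Young's convolution inequality applied to the Toeplitz noise-to-state map with blocks $A^k$, but the trivial trace bound already suffices to recover the stated rate.
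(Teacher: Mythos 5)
Your proof is correct and arrives at the stated bound, but it takes a more self-contained route than the paper. The paper simply invokes Proposition 8.4 of \cite{proof_2} as a black box, which yields $\|\sum_t x_tx_t\t\| \lesssim \psi\,\mathrm{tr}\bigl(\sum_t \Gamma_t(A)\bigr)\log(1/\delta)$ directly, and then bounds $\mathrm{tr}\bigl(\sum_t\Gamma_t(A)\bigr)\leq n(T+1)\max_t\|\Gamma_t(A)\|\lesssim nT\psi_A^2/(1-\rho_A^2)$ using \Cref{assmp3:stable} exactly as you do for your $\|\Gamma_t\|$ bound. You instead re-derive the content of that proposition from first principles: the reduction $\|\sum_t x_tx_t\t\|\leq\mathrm{tr}(\sum_t x_tx_t\t)=\|\vec X\|^2$ for the stacked Gaussian trajectory, the uniform Gramian bound $\|\Gamma_t\|\lesssim\psi\psi_A^2/(1-\rho_A^2)$ giving $\mathrm{tr}(\tilde\Gamma)\lesssim nT\psi\psi_A^2/(1-\rho_A^2)$, and the Laurent--Massart/Hanson--Wright tail with the crude relaxations $\|\tilde\Gamma\|_F^2\leq\mathrm{tr}(\tilde\Gamma)\|\tilde\Gamma\|$ and $\|\tilde\Gamma\|\leq\mathrm{tr}(\tilde\Gamma)$, which collapse the deviation to $\mathrm{tr}(\tilde\Gamma)(1+2\sqrt{u}+2u)\lesssim\mathrm{tr}(\tilde\Gamma)\log(1/\delta)$ since $u=\log(1/\delta)\geq1$. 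Both arguments pay the same factor of $n$ through the trace and both correctly use only $\psi=\psi_B^2\psi_u+\psi_w$ (not $\psi_\eta$), consistent with the lemma's definition. What your version buys is independence from the external citation and transparency about where the looseness enters (operator norm replaced by trace); what the paper's version buys is brevity. Your argument is a valid standalone replacement.
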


\begin{proof}
    Directly applying Proposition 8.4 in \cite{proof_2} gives the following with probability at least $1-\delta$
    \begin{equation}\begin{split}
        \norm{\sum_{t=0}^T x_tx_t\t} \lesssim {}& \psi\tr\b{\sum_{t=0}^{T-1} \Gamma_t(A)}\log\frac{1}{\delta} \leq \psi n \norm{\sum_{t=0}^{T-1} \Gamma_t(A)} \log\frac{1}{\delta}.
    \end{split}\end{equation}
    From Assumption \ref{assmp3:stable}, we know that 
    \begin{equation}\begin{split}
        \norm{\Gamma_t(A)} = \norm{\sum_{\tau=0}^t A^{\tau}A^{\tau}{}\t} \preceq \sum_{\tau=0}^t \norm{A^\tau}^2 \leq 1 + \frac{\psi_A^2}{1-\rho_A^2} \lesssim \frac{\psi_A^2}{1-\rho_A^2}.
    \end{split}\end{equation}
    Substituting back gives
    \begin{equation}\begin{split}
        \norm{\sum_{t=0}^T x_tx_t\t} \lesssim \frac{\psi\psi_A^2n}{1-\rho_A^2}T\log\frac{1}{\delta}.
    \end{split}\end{equation}
\end{proof}

\section{Proof of \Cref{thm:bc}}\label{sc:proof_bc}
\begin{proof}
    For simplicity, define $\tilde{\eta}_t \coloneqq \begin{bmatrix}
        \eta_t\\
        0
    \end{bmatrix}$, $\Sigma_{\tilde{\eta}} \coloneqq \begin{bmatrix}
        \Sigma_\eta & 0\\
        0 & 0
    \end{bmatrix}$, $\wh{\Sigma}_{\tilde{\eta}} \coloneqq \begin{bmatrix}
        \wh{\Sigma}_{\eta} & 0\\
        0 & 0
    \end{bmatrix}$, $\Sigma_{\htz} = \sum_{t=0}^{T-1}\htz_t\htz_t\t$ and $\Sigma_{z} = \sum_{t=0}^{T-1}z_tz_t\t$.
    Recall that by definition (\Cref{eq:estimator_naive2}), we have
    \begin{equation}\begin{split}\label{eq:bcmain_8}
        \htE_{\ls} = {}& E + \underbrace{\sum_{t=0}^{T-1} \b{- A\eta_tz_t\t + \eta_{t+1}\htz_t\t+w_t\htz_t\t}\b{\Sigma_{\htz}}^{-1}}_{\Delta_1} - \sum_{t=0}^{T-1}{E\tilde{\eta}_t\tilde{\eta}_t\t\b{\Sigma_{\htz}}^{-1}}\\
        % = {}& E + \b{\underbrace{\sum_{t=0}^{T-1} \b{\eta_{t+1}+w_t-A\eta_t}\begin{bmatrix}
        %     x_{t-1}\\
        %     u_t
        % \end{bmatrix}\t + \b{\eta_{t+1}+w_t}\begin{bmatrix}
        %     \eta_{t-1}\\
        %     0
        % \end{bmatrix}\t - E\b{\tilde{\eta}_t\tilde{\eta}_t - \Sigma_{\tilde{\eta}}}\t}_{\Delta} - \sum_{t=0}^{T-1} E\Sigma_{\tilde{\eta}}}\b{\Sigma_{\htz}}^{-1}\\
        % % = {}& E \b{I - \Sigma_{\tilde{\eta}}\b{\frac{1}{T}\Sigma_{\htz}}^{-1}} + O\b{\frac{1}{\sqrt{T}}}I. 
        = {}& E\b{I - \b{\sum_{t=0}^{T-1}\tilde{\eta}_t\tilde{\eta}_t\t}\b{\Sigma_{\htz}}^{-1}} + \Delta_1.
    \end{split}\end{equation}
    For the rest of the proof, we will first verify the invertibility of matrix $I - T\wh{\Sigma}_{\tilde{\eta}}\b{\Sigma_{\htz}}^{-1}$ (\textit{Step 1}). We then decompose and bound the error $\htE_{\bc}-E$ (\textit{Step 2}).

    \textit{Step 1: Invertibility of $\mathit{I - T\wh{\Sigma}_{\tilde{\eta}}\b{\Sigma_{\htz}}^{-1}}$.}
    By \Cref{lem:bc_covnoise}, with probability at least $1-\delta$
    \begin{equation}\begin{split}\label{eq:bcmain_2}
        \min\left\{\frac{\phi_R^2}{16}, \frac{3}{8}\right\}\phi_uTI \preceq \Sigma_{\htz} \precsim \frac{\psi\psi_A^2}{1-\rho_A^2}\log\b{\frac{4}{\delta}}nTI, \quad \sigma_{m+n}\b{\Sigma_{\htz}\Sigma_z^{-1}} \geq \frac{1}{3}.
    \end{split}\end{equation}
    This gives the following with probability at least $1-\delta$
    \begin{equation}\begin{split}\label{eq:bcmain_3}
        \sigma_{m+n}\b{I - T\wh{\Sigma}_{\tilde{\eta}}\b{\Sigma_{\htz}}^{-1}}
        \geq {}& \sigma_{m+n}\b{I} - \sigma_1\b{T\wh{\Sigma}_{\tilde{\eta}}\b{\Sigma_{\htz}}^{-1}}\\
        \geq {}& \sigma_{m+n}\b{I} - \sigma_1\b{T\wh{\Sigma}_{\tilde{\eta}}}\sigma_1\b{\b{\Sigma_{\htz}}^{-1}}\\
        \geq {}& 1 - \frac{\b{\psi_\eta+\epsilon_\eta}T}{\min\{\phi_R^2/16, 3/8\}\phi_uT}\\
        \geq {}& \frac{1}{4}.
    \end{split}\end{equation}
    Here the last inequality is by \Cref{assmp:input}.

    \textit{Step 2: Error Upperbound on $\mathit{\norm{\htE_{\bc}-E}}$.}
    By definition, we have
    \begin{equation}\begin{split}\label{eq:errdecomp_bc}
        {}& \htE_{\bc} - E = \htE_{\ls}\b{I - T\wh{\Sigma}_{\tilde{\eta}}\b{\Sigma_{\htz}}^{-1}}^{-1} - E\\
        = {}& E\b{I - \b{\sum_{t=0}^{T-1}\tilde{\eta}_t\tilde{\eta}_t\t}\b{\Sigma_{\htz}}^{-1}}\b{I - T\wh{\Sigma}_{\tilde{\eta}}\b{\Sigma_{\htz}}^{-1}}^{-1} - E + \Delta_1\b{I - T\wh{\Sigma}_{\tilde{\eta}}\b{\Sigma_{\htz}}^{-1}}^{-1}\\
        = {}& E\b{T\wh{\Sigma}_{\tilde{\eta}}-\sum_{t=0}^{T-1}\tilde{\eta}_t\tilde{\eta}_t\t}\b{\Sigma_{\htz}}^{-1}\b{I - T\wh{\Sigma}_{\tilde{\eta}}\b{\Sigma_{\htz}}^{-1}}^{-1} + \Delta_1\b{I - T\wh{\Sigma}_{\tilde{\eta}}\b{\Sigma_{\htz}}^{-1}}^{-1}\\
        % = {}& E + E\b{\wh{\Sigma}_{\tilde{\eta}} - \Sigma_{\tilde{\eta}}}\b{\frac{1}{T}\sum_{t=0}^{T-1}\htz_t\htz_t\t}^{-1}\b{I - \wh{\Sigma}_{\tilde{\eta}}\b{\frac{1}{T}\Sigma_{\htz}}^{-1}}^{-1} + \Delta\b{\Sigma_{\htz}}^{-1}\b{I - \wh{\Sigma}_{\tilde{\eta}}\b{\frac{1}{T}\Sigma_{\htz}}^{-1}}^{-1}.
    \end{split}\end{equation}

    Consider the first term in the last line. 
    By a standard convariance concentration result (Lemma A.5 in \cite{proof_1}), we get $\norm{\sum_{t=0}^{T-1} -\tilde{\eta}_t\tilde{\eta}_t\t + T\Sigma_{\tilde{\eta}}} = \norm{\sum_{t=0}^{T-1} -\eta_t\eta_t\t + T\Sigma_{\eta}} \lesssim \psi_{\eta}\sqrt{nT\log\frac{1}{\delta}}$ with probability at least $1-\delta$ for $T$ satisfying \Cref{eq:Tcond_bc}.
    Therefore, with probability at least $1-\delta$
    \begin{equation}\begin{split}\label{eq:bcmain_4}
        \norm{T\wh{\Sigma}_{\tilde{\eta}}-\sum_{t=0}^{T-1}\tilde{\eta}_t\tilde{\eta}_t\t} \leq T\norm{\wh{\Sigma}_{\tilde{\eta}} - \Sigma_{\tilde{\eta}}} + \norm{T\Sigma_{\tilde{\eta}} - \sum_{t=0}^{T-1}\tilde{\eta}_t\tilde{\eta}_t\t} \lesssim T\epsilon_{\eta} + \psi_{\eta}\sqrt{nT\log\frac{1}{\delta}}.
    \end{split}\end{equation}
    Substituting \Cref{eq:bcmain_2,eq:bcmain_3,eq:bcmain_4} into \Cref{eq:errdecomp_bc} gives the following with probability at least $1-3\delta$
    \begin{equation}\begin{split}\label{eq:bcmain_7}
        {}& \norm{E\b{T\wh{\Sigma}_{\tilde{\eta}}-\sum_{t=0}^{T-1}\tilde{\eta}_t\tilde{\eta}_t\t}\b{\Sigma_{\htz}}^{-1}\b{I - T\wh{\Sigma}_{\tilde{\eta}}\b{\Sigma_{\htz}}^{-1}}^{-1}}\\
        % {}& \norm{E\b{I - \b{\sum_{t=0}^{T-1}\tilde{\eta}_t\tilde{\eta}_t\t}\b{\sum_{t=0}^{T-1} \htz_t\htz_t\t}^{-1}}\b{I - T\wh{\Sigma}_{\tilde{\eta}}\b{\sum_{t=0}^{T-1} \htz_t\htz_t\t}^{-1}}^{-1} - E}\\
        = {}& \norm{E\b{T\wh{\Sigma}_{\tilde{\eta}}-\sum_{t=0}^{T-1}\tilde{\eta}_t\tilde{\eta}_t\t}}\norm{\b{\Sigma_{\htz}}^{-1}}\norm{\b{I - T\wh{\Sigma}_{\tilde{\eta}}\b{\Sigma_{\htz}}^{-1}}^{-1}}\\
        \lesssim {}& \frac{\psi_{\eta}}{\min\{\phi_R^2,6\}\phi_u}\sqrt{\frac{n}{T}\log\frac{1}{\delta}}+\frac{1}{\min\{\phi_R^2,6\}\phi_u}\epsilon_{\eta}.
    \end{split}\end{equation}

    Now consider term $\sum_{t=0}^{T-1} \b{\eta_{t+1}+w_t}\htz_t\t\Sigma_{\htz}^{-1}$ in term $\Delta_1$ (\Cref{eq:errdecomp_bc}). Notice that $\eta_{t+1}+w_t$ is i.i.d. for $t\in[0,T-1]$ and independent of $\htz_{0:t}$, i.e. independent of $u_{0:t}$, $x_{0:t}$ and $\eta_{0:t}$. Additionally, by \Cref{lem:bc_covnoise}, with probability at least $1-\delta$,
    \begin{equation}\begin{split}\label{eq:bcmain_9}
        \min\left\{\frac{\phi_R^2}{16}, \frac{3}{8}\right\}\phi_uTI \preceq \Sigma_{\htz} \precsim \frac{\psi\psi_A^2}{1-\rho_A^2}\log\b{\frac{4}{\delta}}nTI, \quad \sigma_{m+n}\b{\Sigma_{\htz}\Sigma_z^{-1}} \geq \frac{1}{3}.
    \end{split}\end{equation}
    We can then apply \Cref{lem:iv3} (\Cref{eq:iv3_2}) and get the following with probability at least $1-2\delta$
    \begin{equation}\begin{split}\label{eq:bcmain_5}
        \norm{\b{\Sigma_{\htz}}^{-1} \sum_{t=0}^{T-1}\htz_t\b{\eta_{t+1}+w_t}\t} \lesssim \frac{\sqrt{\min\left\{\phi_R^2, 6\right\}\phi_u+16}}{\min\left\{\phi_R^2, 6\right\}\phi_u}\sqrt{\frac{m+n}{T}\b{\psi_\eta+\psi_w}\log\b{\frac{5\psi\psi_A^2}{(1-\rho_A^2)\delta}n\log\frac{4}{\delta}}}.
    \end{split}\end{equation}
    For term $\sum_{t=0}^{T-1} A\eta_tz_t\t\b{\sum_{t=0}^{T-1} \htz_t\htz_t\t}^{-1}$ in $\Delta_1$ (\Cref{eq:errdecomp_bc}). Notice that $\eta_{t}$ is i.i.d. for $t\in[0,T-1]$ and independent of $z_t$, i.e. independent of $u_{0:t}$, $x_{0:t}$. Additionally, by \Cref{lem:bc_cov}, with probability at least $1-\delta$,
    \begin{equation}\begin{split}
        \min\left\{\frac{\phi_R^2}{12}, \frac{1}{2}\right\}\phi_uTI \preceq \Sigma_z \precsim \frac{\psi\psi_A^2}{1-\rho_A^2}\log\b{\frac{4}{\delta}}nTI.
    \end{split}\end{equation}    
    We can then apply \Cref{lem:iv3} (\Cref{eq:iv3_2}), an adapted version of the self-normalizing bound (Theorem 1 in \cite{proof_3}) and get the following with probability at least $1-2\delta$
    \begin{equation}\begin{split}\label{eq:bc_4}
        \norm{\b{\Sigma_{z}}^{-1} \sum_{t=0}^{T-1}z_t\eta_{t}\t} \lesssim \frac{\sqrt{\min\left\{\phi_R^2, 6\right\}\phi_u+12}}{\min\left\{\phi_R^2, 6\right\}\phi_u}\sqrt{\frac{m+n}{T}\psi_\eta\log\b{\frac{5\psi\psi_A^2}{(1-\rho_A^2)\delta}n\log\frac{4}{\delta}}}.
    \end{split}\end{equation}
    Therefore, 
    combining \Cref{eq:bcmain_9,eq:bc_4}, we have  the following with probability at least $1-3\delta$
    \begin{equation}\begin{split}\label{eq:bcmain_6}
        {}& \norm{\b{\Sigma_{\htz}}^{-1}\sum_{t=0}^{T-1} z_t\eta_t\t A\t}\leq \norm{\b{\Sigma_{z}}^{-1}\sum_{t=0}^{T-1} z_t\eta_t\t A\t}\norm{\Sigma_{z}\b{\Sigma_{\htz}}^{-1}}\\
        \lesssim {}& \psi_A\frac{\sqrt{\min\left\{\phi_R^2, 6\right\}\phi_u+12}}{\min\left\{\phi_R^2, 6\right\}\phi_u}\sqrt{\frac{m+n}{T}\psi_\eta\log\b{\frac{5\psi\psi_A^2}{(1-\rho_A^2)\delta}n\log\frac{4}{\delta}}}.
    \end{split}\end{equation}
    Finally, combining \Cref{eq:bcmain_6,eq:bcmain_5,eq:bcmain_3} gives the following with probability at least $1-6\delta$
    \begin{equation}\begin{split}\label{eq:bcmain_10}
        {}& \norm{\Delta_1\b{I - T\wh{\Sigma}_{\tilde{\eta}}\b{\Sigma_{\htz}}^{-1}}^{-1}} \leq \norm{\Delta_1}\norm{\b{I - T\wh{\Sigma}_{\tilde{\eta}}\b{\Sigma_{\htz}}^{-1}}^{-1}}\\
        \lesssim {}& \psi_A\frac{\sqrt{\min\left\{\phi_R^2, 1\right\}\phi_u+1}}{\min\left\{\phi_R^2, 1\right\}\phi_u}\sqrt{\frac{m+n}{T}\psi_\eta\log\b{\frac{5\psi\psi_A^2}{(1-\rho_A^2)\delta}n\log\frac{4}{\delta}}}
    \end{split}\end{equation}

    Finally, we substitute \Cref{eq:bcmain_7,eq:bcmain_10} into \Cref{eq:errdecomp_bc} and get the following with probability at least $1-9\delta$
    \begin{equation*}\begin{split}
        \norm{\htE_{\bc} - E} 
        \lesssim {}& \frac{\psi_{\eta}}{\min\{\phi_R^2,6\}\phi_u}\sqrt{\frac{n}{T}\log\frac{1}{\delta}}+\frac{1}{\min\{\phi_R^2,6\}\phi_u}\epsilon_{\eta}\\
        {}& + \psi_A\frac{\sqrt{\min\left\{\phi_R^2, 1\right\}\phi_u+1}}{\min\left\{\phi_R^2, 1\right\}\phi_u}\sqrt{\frac{m+n}{T}\psi_\eta\log\b{\frac{5\psi\psi_A^2}{(1-\rho_A^2)\delta}n\log\frac{4}{\delta}}}\\
        \lesssim {}& \frac{1}{\min\{\phi_R^2,1\}\phi_u}\epsilon_{\eta} + \psi_A\sqrt{\frac{\min\left\{\phi_R^2, 1\right\}\phi_u+1}{\min\left\{\phi_R^6, 1\right\}\phi_u}}\max\left\{\sqrt{\frac{\psi}{\phi_u}},\frac{\psi}{\phi_u}\right\}\cdot\sqrt{\frac{m+n}{T}\log\b{\frac{5\psi\psi_A^2}{(1-\rho_A^2)\delta}n\log\frac{4}{\delta}}}.
    \end{split}\end{equation*}

\end{proof}

\subsection{Supporting Details}
\begin{lemma}\label{lem:bc_cov}
    Consider the setting of \Cref{thm:bc}.
    Suppose $T$ satisfies \Cref{eq:Tcond_bc}.
    Then with probability at least $1-\delta$ for any $\delta\in(0,1)$,
    we know that
    \begin{equation}\begin{split}
        \min\left\{\frac{\phi_R^2}{12}, \frac{1}{2}\right\}\phi_uTI \preceq \sum_{t=0}^{T-1}z_tz_t\t \precsim \frac{\psi\psi_A^2}{1-\rho_A^2}\log\b{\frac{4}{\delta}}nTI.
    \end{split}\end{equation}
\end{lemma}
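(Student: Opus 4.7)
The plan is to follow essentially the same blueprint that \Cref{lem:iv2} used for $\Sigma_i = \sum_{t=1}^{T-1} i_t i_t\t$, with $i_t = \begin{bmatrix} x_{t-1} \\ u_t \end{bmatrix}$ replaced by $z_t = \begin{bmatrix} x_t \\ u_t \end{bmatrix}$. Since the only structural difference is a one-step index shift on the state component (and the inclusion of $t=0$), the same block decomposition and the same concentration tools apply.

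First, I would write
\begin{equation*}
\sum_{t=0}^{T-1} z_t z_t\t = \begin{bmatrix} \sum_{t=0}^{T-1} x_t x_t\t & 0 \\ 0 & \sum_{t=0}^{T-1} u_t u_t\t \end{bmatrix} + \begin{bmatrix} 0 & \sum_{t=0}^{T-1} x_t u_t\t \\ \sum_{t=0}^{T-1} u_t x_t\t & 0 \end{bmatrix},
\end{equation*}
treating the block-diagonal part as the signal and the off-diagonal part as noise. For the $uu\t$ block, a standard Gaussian covariance concentration (Equation (45) of \cite{proof_2}) gives $\tfrac{3}{4}T\Sigma_u \preceq \sum_{t=0}^{T-1} u_t u_t\t \preceq \tfrac{5}{4}T\Sigma_u$ with probability at least $1-\delta'$ whenever $T \gtrsim m + \log(1/\delta')$. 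For the $xx\t$ block, the lower bound comes from \Cref{lem:cov} (Lemma A.3 of \cite{proof_1}), giving $\sum_{t=0}^{T-1} x_t x_t\t \succeq \tfrac{\phi_u\phi_R^2}{8}TI$, and the upper bound comes from \Cref{lem:cov_upper}, giving $\sum_{t=0}^{T-1} x_t x_t\t \precsim \tfrac{\psi\psi_A^2 n}{1-\rho_A^2}T\log(1/\delta')I$, both on the event of total probability $1-2\delta'$ when $T$ meets the hypothesis in those lemmas.

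Next, for the cross block I apply \Cref{cor:cross_general}: since $\{x_t\}$ is the state trajectory of $\calM$ and $u_t \overset{\text{i.i.d.}}{\sim} \calN(0,\Sigma_u)$ is independent of $x_{0:t}$, \Cref{eq:cross_general_1} yields
\begin{equation*}
\Bigl\| \sum_{t=0}^{T-1} x_t u_t\t \Bigr\| \lesssim \sqrt{\frac{\psi\psi_A^2\psi_u}{1-\rho_A^2}\,n(m+n)\,T\,\log\tfrac{1}{\delta'}\,\log\Bigl(\tfrac{\psi\psi_A^2 n}{(1-\rho_A^2)\delta'}\log\tfrac{1}{\delta'}\Bigr)}
\end{equation*}
with probability at least $1-\delta'$.

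Finally, I combine the three bounds using $\sum z_t z_t\t \succeq (\text{diag block}) - \|\text{cross block}\| \cdot I$ for the lower bound and the triangle inequality for the upper bound. The sample-complexity hypothesis \Cref{eq:Tcond_bc} is exactly what is needed so that the cross-term contribution $\tilde{\calO}(\sqrt{T})$ is dominated by the signal contribution $\Omega(\min\{\phi_R^2,1\}\phi_u T)$, specifically allowing us to absorb it into a constant-factor slack and arrive at the claimed lower bound with constant $\min\{\phi_R^2/12, 1/2\}$ rather than $\min\{\phi_R^2/16, 3/8\}$ (the slightly looser constant from \Cref{lem:iv2} came from starting the sum at $t=1$ and dividing by $T-2$; here we start at $t=0$ and divide by $T$, which cleanly gives the improved constants). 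A union bound over the four events and the relabeling $\delta = 4\delta'$ close out the argument.

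The reasoning is essentially a transcription of the \Cref{lem:iv2} proof, so the only real obstacle is the careful bookkeeping of constants to ensure the $1/12$ and $1/2$ in the stated lower bound are indeed attainable under \Cref{eq:Tcond_bc}; once the $T$-threshold is fixed to absorb the cross term into, say, one-third of the signal, the constants drop out of straightforward arithmetic.
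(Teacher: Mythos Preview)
Your proposal is correct and follows essentially the same approach as the paper's proof: the same block decomposition of $\sum_{t=0}^{T-1}z_tz_t\t$, the same appeals to the Gaussian covariance concentration for $\sum u_tu_t\t$, to \Cref{lem:cov} and \Cref{lem:cov_upper} for $\sum x_tx_t\t$, to \Cref{cor:cross_general} for the cross block, and the same union bound with $\delta=4\delta'$. Your explanation for the improved constants $\min\{\phi_R^2/12,1/2\}$ versus $\min\{\phi_R^2/16,3/8\}$ is also consistent with what the paper does (the paper's diagonal lower bound is $\min\{\phi_R^2/8,3/4\}\phi_u(T-1)$ here versus $(T-2)$ in \Cref{lem:iv2}, and the authors simply choose different absorbed constants).
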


\begin{proof}
    Throughout this proof, we consider $\delta'\in(0,\frac{1}{3})$, which ensures $1\leq \log(k/\delta') \leq k\log(1/\delta')$ for all positive integer $k$.
    By definition, we have
    \begin{equation}\begin{split}\label{eq:cobc_1}
        \sum_{t=0}^{T-1} z_tz_t\t = {}& \sum_{t=0}^{T-1} \begin{bmatrix}
            x_t\\
            u_t
        \end{bmatrix}\begin{bmatrix}
            x_t\\
            u_t
        \end{bmatrix}\t= \sum_{t=0}^{T-1} \begin{bmatrix}
            x_tx_t\t & 0\\
            0 & u_tu_t\t
        \end{bmatrix} + \begin{bmatrix}
            0 & x_{t}u_t\t\\
            u_tx_{t}\t & 0
        \end{bmatrix}.
    \end{split}\end{equation}

    We first show the signal terms, i.e. $\sum_{t=0}^{T-1} x_{t}x_{t}\t$ and $\sum_{t=0}^{T-1} u_tu_t\t$ are large. By a standard covariance concentration result (Equation (45) of \cite{proof_2}), we have the following inequalities for $T \gtrsim n + \log\b{\frac{1}{\delta'}}$, each with probability at least $1-\delta'$
\begin{equation}\begin{split}\label{eq:cobc_5}
    \frac{3}{4}T\Sigma_u \preceq \sum_{t=0}^{T-1} u_tu_t\t \preceq \frac{5}{4}T\Sigma_u.
\end{split}\end{equation}
Moreover, by a standard lower bound (\Cref{lem:cov}) and upper bound (\Cref{lem:cov_upper}) on the empirical state covariance, with probability at least $1-2\delta'$
\begin{equation}\begin{split}\label{eq:cobc_6}
    \frac{\phi_u\phi_R^2}{8}(T-1)I \preceq \sum_{t=0}^{T-1} x_{t}x_{t}\t \precsim \frac{\psi\psi_A^2n}{1-\rho_A^2}TI\log\frac{1}{\delta'}.
\end{split}\end{equation}
for any $T$ satisfies
\begin{equation}\begin{split}
    T \gtrsim\frac{\psi^2\psi_A^4}{\phi_u^2\phi_R^4}n^3\log\b{\frac{n}{\delta'}}\log\b{\frac{\psi\psi_A^4}{1-\rho_A^2}n\log\frac{n}{\delta'}}.
\end{split}\end{equation}

Now we prove that the cross terms are small. We note that $\{x_t\}_{t=1}^{T-1}$ is a trajectory from the target system $\calM = (A,B,\Sigma_w, \Sigma_u, \Sigma_\eta)$, and that $\{u_{t}\}_{t=1}^{T-1}$ a sequence of i.i.d. gaussian vectors such that $u_t$ is independent of $x_{0:t}$. Therefore, we can apply \Cref{cor:cross_general}, an adapted version of the well-established self-normalizing bound (Theorem 1 in \cite{proof_3}), on the two sequences and get the following with probability at least $1-\delta'$
\begin{equation}\begin{split}\label{eq:cobc_3}
    \norm{\sum_{t=0}^{T-1} x_{t}u_{t}\t } \lesssim {}& \sqrt{\frac{\psi\psi_A^2\psi_u}{1-\rho_A^2}n\max\{m,n\}T\log\b{\frac{1}{\delta'}}}\sqrt{\log\b{\frac{\psi\psi_A^2}{(1-\rho_A^2)\delta'}n\log\frac{1}{\delta'}}}.
\end{split}\end{equation}

Substituting \Cref{eq:cobc_3,eq:cobc_5,eq:cobc_6} back into \Cref{eq:cobc_1} gives the following for some absolute constant $c_1$ with probability at least $1-4\delta'$
\begin{equation}\begin{split}\label{eq:resbc}
    \sum_{t=0}^{T-1} z_tz_t\t \succeq {}& \sum_{t=0}^{T-1} \begin{bmatrix}
        x_{t}x_{t}\t & 0\\
        0 & u_tu_t\t
    \end{bmatrix} - \norm{\sum_{t=0}^{T-1}\begin{bmatrix}
        0 & x_{t}u_t\t\\
        u_tx_{t}\t & 0
    \end{bmatrix}}I\\
    \succeq {}& \min\left\{\frac{\phi_R^2}{8}, \frac{3}{4}\right\}\phi_u(T-1)I - c_1\sqrt{\frac{\psi^2\psi_A^2}{1-\rho_A^2}(m+n)^2\log\b{\frac{1}{\delta'}}}\sqrt{\log\b{\frac{\psi\psi_A^2}{(1-\rho_A^2)\delta'}n\log\frac{1}{\delta'}}}\sqrt{T}\\
    \succeq {}& \min\left\{\frac{\phi_R^2}{12}, \frac{1}{2}\right\}\phi_uTI.
\end{split}\end{equation}
Here the last line is by \Cref{eq:Tcond_bc} where $\delta=4\delta'$. Moreover,
\begin{equation}\begin{split}
    \sum_{t=0}^{T-1} z_tz_t\t \preceq {}& \norm{\sum_{t=0}^{T-1} \begin{bmatrix}
        x_{t}x_{t}\t & 0\\
        0 & u_tu_t\t
    \end{bmatrix}}I + \norm{\sum_{t=0}^{T-1}\begin{bmatrix}
        0 & x_{t}u_t\t\\
        u_tx_{t}\t & 0
    \end{bmatrix}}I\\
    \precsim {}& \max\left\{\frac{\psi\psi_A^2n}{1-\rho_A^2}T\log\frac{1}{\delta'}, \frac{5}{4}\psi_{u}T\right\}I + c_1\sqrt{\frac{\psi^2\psi_A^2}{1-\rho_A^2}(m+n)^2\log\b{\frac{1}{\delta'}}}\sqrt{\log\b{\frac{\psi\psi_A^2}{(1-\rho_A^2)\delta'}n\log\frac{1}{\delta'}}}\sqrt{T}\\
    \precsim {}& \frac{\psi\psi_A^2n}{1-\rho_A^2}TI\log\frac{1}{\delta'}.
\end{split}\end{equation}
Here the last line is by \Cref{eq:Tcond_bc} with $\delta=4\delta'$. 
Finally, let $\delta = 4\delta'$, and we get the desired result.

\end{proof}

\begin{lemma}\label{lem:bc_covnoise}
    Consider the setting of \Cref{thm:bc}.
    Suppose $T$ satisfies \Cref{eq:Tcond_bc}.
    Then with probability at least $1-\delta$ for any $\delta\in(0,1)$,
    we know that
    \begin{equation}\begin{split}\label{eq:covnoise_1}
        \min\left\{\frac{\phi_R^2}{16}, \frac{3}{8}\right\}\phi_uTI \preceq \sum_{t=0}^{T-1}\htz_t\htz_t\t \precsim \frac{\psi\psi_A^2}{1-\rho_A^2}\log\b{\frac{3}{\delta}}nTI.
    \end{split}\end{equation}
    and that 
    \begin{equation}\begin{split}\label{eq:covnoise_2}
        \sigma_{m+n}\b{\b{\sum_{t=0}^{T-1} \htz_t\htz_t\t}\Sigma_z^{-1}} \geq \frac{1}{3}.
    \end{split}\end{equation}
\end{lemma}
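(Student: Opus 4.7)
The plan is to mirror the strategy used for $\Sigma_z$ in \Cref{lem:bc_cov}, but with extra care to absorb the observation-noise terms that appear when we replace $z_t$ by $\htz_t = z_t + \tilde\eta_t$. Writing
\begin{equation*}
\Sigma_{\htz} = \Sigma_z + C + C\t + \Sigma_{\tilde\eta}, \qquad C \coloneqq \sum_{t=0}^{T-1} z_t\tilde\eta_t\t, \qquad \Sigma_{\tilde\eta} \coloneqq \sum_{t=0}^{T-1}\tilde\eta_t\tilde\eta_t\t,
\end{equation*}
I will bound $\Sigma_z$ via the already-proven \Cref{lem:bc_cov}, bound $\Sigma_{\tilde\eta}$ using a standard covariance-concentration inequality (e.g.\ Equation (45) in \cite{proof_2}) to obtain $\|\Sigma_{\tilde\eta}\| \leq \tfrac{5}{4}\psi_\eta T$ with high probability, and bound the cross term $C$ using \Cref{cor:cross_general}. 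For the last bound, I exploit that $\{\tilde\eta_t\}$ is an i.i.d.\ Gaussian sequence with $\tilde\eta_t$ independent of $z_{0:t}$ (since $\eta_t$ is independent of the state/input history), so \Cref{eq:cross_general_1} applied with $y_t = z_t$ and $\zeta_t = \tilde\eta_t$ yields $\|C\| \leq \tilde{\calO}(\sqrt{(m+n)T})$.

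For the two-sided spectral bound \Cref{eq:covnoise_1}, both directions then follow by combining the three pieces. The upper bound is immediate: the $\tilde{\calO}(\psi\psi_A^2 nT/(1-\rho_A^2))$ upper bound on $\|\Sigma_z\|$ from \Cref{lem:bc_cov} already dominates $\|\Sigma_{\tilde\eta}\| \lesssim \psi_\eta T$ (since $\psi \geq \psi_\eta$ by definition) and the $\tilde{\calO}(\sqrt{T})$ cross term, matching the claimed upper bound up to absolute constants. For the lower bound, I will use $\Sigma_{\htz} \succeq \Sigma_z - 2\|C\|I$ (discarding $\Sigma_{\tilde\eta} \succeq 0$); the sample-complexity condition \Cref{eq:Tcond_bc} is precisely what forces the cross-term correction to consume at most a controlled fraction of the $\min\{\phi_R^2/12, 1/2\}\phi_u T$ lower bound from \Cref{lem:bc_cov}, leaving the stated $\min\{\phi_R^2/16, 3/8\}\phi_u T$.

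For the multiplicative guarantee \Cref{eq:covnoise_2}, I will expand
\begin{equation*}
\Sigma_{\htz}\Sigma_z^{-1} = I + (C + C\t)\Sigma_z^{-1} + \Sigma_{\tilde\eta}\Sigma_z^{-1},
\end{equation*}
and apply Weyl's inequality to obtain $\sigma_{m+n}(\Sigma_{\htz}\Sigma_z^{-1}) \geq 1 - 2\|C\|\,\|\Sigma_z^{-1}\| - \|\Sigma_{\tilde\eta}\|\,\|\Sigma_z^{-1}\|$. The cross-term contribution is $\tilde{\calO}(1/\sqrt T)$ and is absorbed by \Cref{eq:Tcond_bc}. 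The main obstacle is that $\|\Sigma_{\tilde\eta}\|\,\|\Sigma_z^{-1}\|$ does \emph{not} vanish with $T$: it sits at roughly $\tfrac{5\psi_\eta}{4\min\{\phi_R^2/12, 1/2\}\phi_u}$, so it must be controlled by assumption rather than by concentration. This is precisely what \Cref{assmp:input} buys us; a short case split on whether $\phi_R^2 \geq 6$ shows that the condition $\phi_u \geq 32(\psi_\eta+\epsilon_\eta)/\min\{\phi_R^2,6\}$ forces $\|\Sigma_{\tilde\eta}\|\,\|\Sigma_z^{-1}\| \leq 1/2$, after which the vanishing cross-term slack yields the claimed $\sigma_{m+n}(\Sigma_{\htz}\Sigma_z^{-1}) \geq 1/3$.
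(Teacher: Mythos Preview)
Your proposal is correct and follows essentially the same approach as the paper: the same decomposition $\Sigma_{\htz} = \Sigma_z + C + C\t + \Sigma_{\tilde\eta}$, the same use of \Cref{lem:bc_cov} and \Cref{cor:cross_general}, the same lower-bound trick of discarding $\Sigma_{\tilde\eta}\succeq 0$, and the same treatment of \Cref{eq:covnoise_2} via Weyl's inequality with \Cref{assmp:input} handling the non-vanishing $\|\Sigma_{\tilde\eta}\|\|\Sigma_z^{-1}\|$ term. The paper's constants work out exactly as you anticipate (the noise ratio is $15\psi_\eta/(\min\{\phi_R^2,6\}\phi_u)\leq 1/2$ under \Cref{assmp:input}), and your case split on $\phi_R^2\gtrless 6$ is unnecessary since $\min\{\phi_R^2/12,1/2\}=\min\{\phi_R^2,6\}/12$ directly.
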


\begin{proof}
    Throughout this proof, we consider $\delta'\in(0,\frac{1}{3})$, which ensures $1\leq \log(k/\delta') \leq k\log(1/\delta')$ for all positive integer $k$.
    By definition, we have that
    \begin{equation}\begin{split}\label{eq:cov_bc}
        \sum_{t=0}^{T-1} \htz_t\htz_t\t = {}& \sum_{t=0}^{T-1} z_tz_t\t + z_t\begin{bmatrix}
            \eta_t\\
            0
        \end{bmatrix}\t + \begin{bmatrix}
            \eta_t\\
            0
        \end{bmatrix}z_t\t + \begin{bmatrix}
            \eta_t\\
            0
        \end{bmatrix}\begin{bmatrix}
            \eta_t\\
            0
        \end{bmatrix}\t.
    \end{split}\end{equation}

    For the first covariance term, we apply \Cref{lem:bc_cov} and get the following with probability at least $1-\delta'$,
    \begin{equation}\begin{split}\label{eq:covbc_1}
        \min\left\{\frac{\phi_R^2}{12}, \frac{1}{2}\right\}\phi_uTI \preceq \sum_{t=0}^{T-1}z_tz_t\t \precsim \frac{\psi\psi_A^2}{1-\rho_A^2}\log\b{\frac{1}{\delta'}}nTI.
    \end{split}\end{equation}
    For the last noise covariance, by a standard covariance concentration result (Equation (45) of \cite{proof_2}), we have the following for $T \gtrsim n + \log\b{\frac{2}{\delta'}}$ with probability at least $1-\delta'$
\begin{equation}\begin{split}\label{eq:crossbc_3}
    \frac{3}{4}T\Sigma_{\eta} \preceq \sum_{t=0}^{T-1} \eta_t\eta_t\t \preceq \frac{5}{4}T\Sigma_\eta.
\end{split}\end{equation}
for any $T$ satisfies $T \gtrsim\frac{\psi^2\psi_A^4}{\phi_u^2\phi_R^4}n^3\log\b{\frac{n}{\delta'}}\log\b{\frac{\psi\psi_A^4}{1-\rho_A^2}n\log\frac{n}{\delta'}}$.

Now we bound the noise terms using \Cref{cor:cross_general}, an adapted version of the well-established self-normalizing bound (Theorem 1 in \cite{proof_3}). We note that $\eta_t$ is independent of $z_{0:t-1}$, i.e. $x_{0:t-1}$ and $u_{0:t-1}$. Therefore, we apply \Cref{eq:cross_general_1} on $\{z_t\}_{t=0}^{T-1}$ and $\{\eta_{t}\}_{t=0}^{T-1}$ and get the following with probability at least $1-\delta'$
\begin{equation}\begin{split}\label{eq:crossbc_2}
    \norm{\sum_{t\in[T]} z_t\eta_t\t }\lesssim {}& \sqrt{\frac{\psi^2\psi_A^2}{1-\rho_A^2}(m+n)^2T\log\b{\frac{1}{\delta'}}\log\b{\frac{\psi\psi_A^2n}{\b{1-\rho_A^2}\delta'}\log\frac{1}{\delta'}}}.
\end{split}\end{equation}
% Here in the last line we have defined $\psi = \max\{\psi_{\tilw}, \psi_\eta, \psi_u\}$.

\textbf{We first use the above inequalities to prove \Cref{eq:covnoise_1}.} Substituting back into \Cref{eq:cov_bc} gives the following for some absolute constant $c_1$ with probability at least $1-3\delta'$
\begin{equation}\begin{split}\label{eq:crossbc_4}
    \sum_{t=0}^{T-1} \htz_t\htz_t\t \succeq {}& \sum_{t=0}^{T-1} z_tz_t\t - 2\norm{\sum_{t=0}^{T-1} z_t\begin{bmatrix}
        \eta_t\\
        0
    \end{bmatrix}\t}I\\
    \succeq {}& \min\left\{\frac{\phi_R^2}{12}, \frac{1}{2}\right\}\phi_uTI - c_1\sqrt{\frac{\psi^2\psi_A^2}{1-\rho_A^2}(m+n)^2T\log\b{\frac{1}{\delta'}}\log\b{\frac{\psi\psi_A^2n}{\b{1-\rho_A^2}\delta'}\log\frac{1}{\delta'}}}\sqrt{T}I\\
    \succeq {}& \min\left\{\frac{\phi_R^2}{16}, \frac{3}{8}\right\}\phi_uTI.
\end{split}\end{equation}
Here the last inequality is by \Cref{eq:Tcond_bc} with $\delta=3\delta'$. Moreover, 
\begin{equation}\begin{split}\label{eq:bcres_1}
    \sum_{t=0}^{T-1} \htz_t\htz_t\t \preceq {}& \norm{\sum_{t=0}^{T-1} z_tz_t\t}I + 2\norm{\sum_{t=0}^{T-1} z_t\begin{bmatrix}
        \eta_t\\
        0
    \end{bmatrix}\t}I + \norm{\begin{bmatrix}
        \eta_t\\
        0
    \end{bmatrix}\begin{bmatrix}
        \eta_t\\
        0
    \end{bmatrix}\t}\\
    \precsim {}& \frac{\psi\psi_A^2}{1-\rho_A^2}\log\b{\frac{1}{\delta'}}nTI + c_1\sqrt{\frac{\psi^2\psi_A^2}{1-\rho_A^2}(m+n)^2T\log\b{\frac{1}{\delta'}}\log\b{\frac{\psi\psi_A^2n}{\b{1-\rho_A^2}\delta'}\log\frac{1}{\delta'}}}\sqrt{T}I + \psi_{\eta}TI\\
    \precsim {}& \frac{\psi\psi_A^2}{1-\rho_A^2}\log\b{\frac{1}{\delta'}}nTI.
\end{split}\end{equation}
Substituting $\delta = 3\delta'$ gives the desired result.

\textbf{We now prove \Cref{eq:covnoise_2}. } Let $\Sigma_z \coloneqq \sum_{t=0}^{T-1} z_tz_t\t$. By \Cref{eq:cov_bc}, 
\begin{equation}\begin{split}\label{eq:crossbc_5}
    \b{\sum_{t=0}^{T-1} \htz_t\htz_t\t}\Sigma_z^{-1} = {}& I + \b{\sum_{t=0}^{T-1} z_t\begin{bmatrix}
        \eta_t\\
        0
    \end{bmatrix}\t + \begin{bmatrix}
        \eta_t\\
        0
    \end{bmatrix}z_t\t}\Sigma_z^{-1} + \b{\sum_{t=0}^{T-1}\begin{bmatrix}
        \eta_t\\
        0
    \end{bmatrix}\begin{bmatrix}
        \eta_t\\
        0
    \end{bmatrix}\t}\Sigma_z^{-1}.    
\end{split}\end{equation}
By \Cref{eq:crossbc_2,eq:covbc_1}, 
\begin{equation}\begin{split}
    {}& \norm{\b{\sum_{t=0}^{T-1} z_t\begin{bmatrix}
        \eta_t\\
        0
    \end{bmatrix}\t + \begin{bmatrix}
        \eta_t\\
        0
    \end{bmatrix}z_t\t}\Sigma_z^{-1}} \leq 2\norm{\sum_{t=0}^{T-1} z_t\eta_t\t}\norm{\Sigma_z^{-1}}\\
    \lesssim {}& \sqrt{\frac{\psi^2\psi_A^2}{(1-\rho_A^2)\phi_u^2\min\left\{\phi_R^4, 1\right\}}\log\b{\frac{1}{\delta'}}\log\b{\frac{\psi\psi_A^2n}{\b{1-\rho_A^2}\delta'}\log\frac{1}{\delta'}}} \sqrt{\frac{(m+n)^2}{T}}.
\end{split}\end{equation}
By \Cref{eq:crossbc_3,eq:covbc_1}, 
\begin{equation}\begin{split}
    \norm{\b{\sum_{t=0}^{T-1}\begin{bmatrix}
        \eta_t\\
        0
    \end{bmatrix}\begin{bmatrix}
        \eta_t\\
        0
    \end{bmatrix}\t}\Sigma_z^{-1}} \leq \norm{\sum_{t=0}^{T-1}\eta_t\eta_t\t}\norm{\Sigma_z^{-1}} \leq \frac{15\psi_{\eta}}{\min\left\{\phi_R^2, 6\right\}\phi_u}
\end{split}\end{equation}
Substituting back into \Cref{eq:crossbc_5} gives
\begin{equation}\begin{split}
    {}& \sigma_{m+n}\b{\b{\sum_{t=0}^{T-1} \htz_t\htz_t\t}\overline{\Sigma}_z^{-1}}\\
    \geq {}& 1 - \sigma_1\b{\b{\sum_{t=0}^{T-1} z_t\begin{bmatrix}
        \eta_t\\
        0
    \end{bmatrix}\t + \begin{bmatrix}
        \eta_t\\
        0
    \end{bmatrix}z_t\t}\overline{\Sigma}_z^{-1}} - \sigma_1\b{\b{\sum_{t=0}^{T-1}\begin{bmatrix}
        \eta_t\\
        0
    \end{bmatrix}\begin{bmatrix}
        \eta_t\\
        0
    \end{bmatrix}\t}\overline{\Sigma}_z^{-1}}\\
    \geq {}& 1 - \frac{15\psi_{\eta}}{\min\left\{\phi_R^2, 6\right\}\phi_u}\\
    {}&  - c_1\sqrt{\frac{\psi^2\psi_A^2}{(1-\rho_A^2)\phi_u^2\min\left\{\phi_R^4, 1\right\}}\log\b{\frac{1}{\delta'}}\log\b{\frac{\psi\psi_A^2n}{\b{1-\rho_A^2}\delta'}\log\frac{1}{\delta'}}} \sqrt{\frac{(m+n)^2}{T}}\\
\end{split}\end{equation}
Since $\min\{\phi_R^2,6\}\phi_u \geq 30\psi_{\eta}$ by \Cref{assmp:input} and $T$ satisfies \Cref{eq:Tcond_bc} ($\delta=3\delta'$), we have
\begin{equation}\begin{split}
    \sigma_{m+n}\b{\b{\sum_{t=0}^{T-1} \htz_t\htz_t\t}\Sigma_z^{-1}} \geq \frac{1}{3}.
\end{split}\end{equation}

\end{proof}

\end{document}